\DeclareMathOperator{\cost}{cost}
\DeclareMathOperator{\True}{True}
\DeclareMathOperator{\False}{False}
\newcommand{\itemb}{\item[$\bullet$]}
\newcommand{\dsum}[2]{\displaystyle{\sum_{#1}^{#2}} }
\newcommand*{\dcup}[2]{\displaystyle{\underset{#1}{\overset{#2}{\bigcup}}}}
\renewcommand{\bar}[1]{\bar{#1}}
\renewcommand{\leq}{\leqslant}
\renewcommand{\geq}{\geqslant}
\def\ie{\emph{i.e.}\xspace}
\def\cR{\mathcal{R}}
\renewcommand{\bar}[1]{\overline{#1}}
\newcommand{\pbDef}[3]{%
	\noindent
	\begin{center}
		\begin{boxedminipage}{ \columnwidth}
			\textbf{#1}\\[5pt]
			\textbf{Input:}  #2\\
			\textbf{Output:}  #3
		\end{boxedminipage}
	\end{center}
}
\theoremstyle{theorem}
\newtheorem{theorem}{Theorem}[section]
\newtheorem{lemma}[theorem]{Lemma}
\newtheorem{corollary}[theorem]{Corollary}
\theoremstyle{definition}
\newtheorem{definition}[theorem]{Definition}
\newtheorem{claim}[theorem]{Claim}
\begin{document}
	
	\title{On Minimum Connecting Transition Sets in Graphs}

	\author{Thomas Bellitto}
	\address{Université de Bordeaux, LABRI, CNRS, France}
	\email{thomas.bellitto@u-bordeaux.fr}
	 \author{Benjamin Bergougnoux}
	 \address{Université Clermont Auvergne, LIMOS, CNRS, France }
	 \email{benjamin.bergougnoux@gmail.com} 
	\thanks{This work is supported by French Agency for Research under the GraphEN project (ANR-15-CE-0009).}

	\maketitle

  \begin{abstract}
  A forbidden transition graph is a graph defined together with a set of permitted transitions \textit{i.e.} unordered pair of adjacent edges that one may use consecutively in a walk in the graph. In this paper, we look for the smallest set of transitions needed to be able to go from any vertex of the given graph to any other. We prove that this problem is NP-hard and study approximation algorithms. We develop theoretical tools that help to study this problem.
  \end{abstract}
	
\section{Introduction}
\label{sec:intro}

Graphs are the model of choice to solve routing problems in all sorts of networks. Depending on the applications, we sometimes need to express stronger constraints than what the standard definitions allow for. Indeed, in many practical cases, including optical networks, road networks or public transit systems among others, the set of possible walks a user can take is much more complex than the set of walks in a graph (see \cite{Ahmed} or \cite{SCTM} for examples). To model a situation where a driver coming from a given road may not turn left while both the road he comes from and the road on the left exists, we have to define the permitted walks by taking into account not only the edges of the graph that a walk may use but also the transitions. A transition is a pair of adjacent edges and we call forbidden-transition graph a graph defined together with a set of permitted transitions. 

Graphs with forbidden transitions have appeared in the literature in \cite{Kotzig} and have received a lot of interest since, as well as other more specific models such as properly colored paths \cite{Daykin,Gutin}. 
Many problems are harder in graphs with forbidden transitions, such as determining the existence of an elementary path (a path that does not use twice the same vertex) between two vertices which is a well-known polynomial problem in graph without forbidden transitions and has been proved NP-complete otherwise (\cite{szeider}). Algorithms for this problem have been studied in the general case \cite{Kante-algopath} and also on some subclasses of graphs \cite{Kante-grid}.

Forbidden transitions can also be used to measure the robustness of graph properties. In \cite{Sudakov}, Sudakov studies the Hamiltonicity of a graph with the idea that even Hamiltonian graphs can be more or less strongly Hamiltonian (an Hamiltonian graph is a graph in which there exists an elementary cycle that uses all the vertices). 
The number of transitions one needs to forbid for a graph to lose its Hamiltonicity gives a measure of its robustness: if the smallest set of forbidden transitions that makes a graph lose its Hamiltonicity has size 4, this means that this graph can withstand the failure of three transitions, no matter where the failures happen.

The notion we are interested in in this paper is not Hamiltonicity but connectivity (the possibility to go from any vertex to any other), which is probably one of the most important properties we expect from any communication or transport network. 
However, our work differs from others in that we are not looking for the minimum number of transitions to forbid to disconnect the graph but for the minimum number of transitions to allow to keep the graph connected (the equivalent of minimum spanning trees for transitions). 
In other words, we are looking for the maximum number of transitions that can fail without disconnecting the graph, provided we get to choose which transitions still work. 
This does not provide a valid measure of the robustness of the network but measuring the robustness is only one part (the definition of the objective function) of the problem of robust network design. 
In most practical situations, robustness is achievable but comes at a cost and the optimization problem consists in creating a network as robust as possible for the minimum cost. In this respect, it makes sense to be able to choose where the failure are less likely to happen. Our problem highlights which transitions are the most important for the proper functioning of the network and this is where special attention must be paid in its design or maintenance. As long as those transitions work, connectivity is assured.

We also would like to point out that in practice, unusable transitions are not always the result of a malfunction. Consider a train network and imagine that there is a train going from a town $A$ to a town $B$ and one going from the town $B$ to a town $C$. In the associated graph, there is an edge from $A$ to $B$ and one from $B$ to $C$ but if the second train leaves before the first one arrives, the transition is not usable and this kind of situation is clearly unavoidable in practice even if no special problem happens. 
Highlighting the most important transitions in the network thus helps design the schedule, even before the question of robustness arises.

Unlike Hamiltonicity or the existence of elementary path between two vertices, testing the connectivity is an easy task to perform even on graphs with forbidden transitions (note that a walk connecting two vertices does not have to be elementary). 
However, we prove that the problem of determining the smallest set of transitions that maintains the connectivity of the given graph is NP-hard even on co-planar graph which is the main contribution of the paper (see Section \ref{sec:NPC}). 
Other notable contributions include a $O(|V|^2)$-time $\frac 3 2$-approximation (Theorem \ref{th:aprox}) and a reformulation of the problem (Theorem \ref{thm:equipb}) which was of great help in the proofs of the other results and could hopefully be useful again in subsequent works.	

\paragraph{\textbf{Definitions and notations}}
\label{sec:def}
Throughout this paper, we only consider finite simple graphs, \textit{i.e.} undirected graphs with a finite number of vertices, no multiple edges and no loop.
Let $G$ be a graph. The vertex set of $G$ is denoted by $V(G)$ and its edge set by $E(G)$. 
The size of a set $S$ is denoted by $|S|$. We denote by $d(v)$ the degree of a vertex $v$.
We write $xy$ to denote an edge $\{x,y\}$. 
We define a walk in $G$ as a sequence $W=(v_1,\dots,v_k)$ of vertices such that for all $i\leqslant k-1$, $v_iv_{i+1}\in E(G)$ and we say that $W$ uses the edge $v_iv_{i+1}$. 
Here, we say that the walk $W$ leads from the vertex $v_1$ to $v_k$.

For $X\subseteq V(G)$, we denote by $G[X]$ the subgraph of $G$ induced by $X$. 
We also denote by $G-X$ the subgraph of $G$ induced by $V(G)\setminus X$. For $x\in V(G)$, we write $G-x$ instead of $G-\{x\}$.
We denote by $\bar G$ the complement of $G$ \ie the graph such that $V(\bar G)=V(G)$ and $E(\bar G)=\{xy\in{V(G)\choose 2} : xy\notin E(G)\}$. 
We say that a graph $G$ is co-connected if and only if $\bar G$ is connected. 
We also call \emph{co-connected components} (or \emph{co-cc}) of $G$ the connected components of $\bar G$.

\paragraph{\bf Transitions} A transition is a set of two adjacent edges. 
We write $abc$ for the transition $\{ab,bc\}$. 
If a walk uses the edges $ab$ and $bc$ consecutively (with $a\neq c$), we say that it uses the transition $abc$. For example, the walk $(u,v,w,v,x)$ uses the transitions $uvw$ and $wvx$. 
Let $T$ be a set of transitions of $G$ and $W=(v_1\dots v_k)$ be a walk on $G$. 
We say that $W$ is $T$-compatible if and only if it only uses transitions of $T$ \ie for all $i\in[1,k-2]$, we have  $v_i v_{i+1} v_{i+2} \in T$ or $v_i=v_{i+2}$ (\ie $v_iv_{i+1}$ and $v_{i+1}v_{i+2}$ are the same edge). 
Observe that a walk consisting of two vertices is always $T$-compatible. 
If for all vertices $u$ and $v$ of $V(G)$, there exists a $T$-compatible walk between $u$ and $v$, then we say that $G$ is $T$-connected and that $T$ is a connecting transition set of $G$. 
The problem we study here is the following:
\pbDef{Minimum Connecting Transition Set (MCTS)}
{A connected graph $G$.}
{A minimum connecting transition set of $G$.}

	\section{Polynomial algorithms and structural results}
	\label{sec:result}
	In this section, we only consider graphs with at least 2 vertices. Our problem is trivial otherwise.
	
	\begin{lemma}\label{lem:tree}
		If $G$ is a tree then a minimum connecting transition set of $G$ has size $|V(G)|-2$.
	\end{lemma}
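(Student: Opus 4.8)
\noindent The plan is to prove matching bounds: every connecting transition set of a tree $G$ has size at least $|V(G)|-2$, and one of size exactly $|V(G)|-2$ exists. Both bounds come from a single structural observation about how $T$-compatible walks behave at a vertex of a tree. For a vertex $w$, define the \emph{link} $L_w$ to be the graph whose vertex set is the set of edges of $G$ incident to $w$, with an edge between $e_1$ and $e_2$ whenever $e_1e_2 \in T$ (we say such a transition is \emph{centered} at $w$). Since every transition is centered at exactly one vertex, $|T| = \sum_{w \in V(G)} |E(L_w)|$, so it suffices to control, for each $w$, how few edges $L_w$ can have.

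\noindent For the lower bound I would first show that if the tree $G$ is $T$-connected then every link $L_w$ is connected. The point where the tree hypothesis is essential is that each connected component of $G-w$ is joined to $w$ by exactly one edge; hence, once a walk leaves $w$ along an edge $e$ it stays in the component of $G-w$ entered through $e$ until it next returns to $w$, and it must return along $e$. Chaining this across the successive visits of the walk to $w$ shows that if a $T$-compatible walk uses two edges $e,e'$ incident to $w$, then $e$ and $e'$ lie in the same connected component of $L_w$. Now if some $L_w$ were disconnected, pick $e=wx$ and $e'=wx'$ incident to $w$ in distinct components; then no $T$-compatible walk could join a vertex in the component of $G-w$ beyond $e$ to a vertex in the component beyond $e'$, contradicting $T$-connectivity. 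Since a connected graph on $d(w)$ vertices has at least $d(w)-1$ edges, $|E(L_w)| \geq d(w)-1$, and summing, using the handshake lemma and $|E(G)| = |V(G)|-1$,
\[
|T| = \sum_{w \in V(G)} |E(L_w)| \geq \sum_{w \in V(G)} (d(w)-1) = 2|E(G)| - |V(G)| = |V(G)| - 2 .
\]

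\noindent For the upper bound I would exhibit a connecting transition set of exactly this size. At every non-leaf vertex $w$, fix one incident edge $\epsilon_w$ and put into $T$ the $d(w)-1$ transitions $\{\epsilon_w, e\}$ for each edge $e \neq \epsilon_w$ incident to $w$; then $L_w$ is a star, hence connected, and the total is $\sum_{w}(d(w)-1) = |V(G)|-2$. To see that this $T$ is connecting, take any $u,v$ and the unique path $x_0 = u, x_1, \dots, x_k = v$ of $G$. At each internal $x_i$ one can ``turn'' from the edge $x_{i-1}x_i$ to the edge $x_ix_{i+1}$ using at most two transitions of $L_{x_i}$: if neither of these edges is $\epsilon_{x_i}$, go out along $\epsilon_{x_i}$ and immediately back (a trivial backtrack), then leave along $x_ix_{i+1}$; both resulting transitions are in $T$ by construction. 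Concatenating these local detours yields a $T$-compatible walk from $u$ to $v$.

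\noindent The step I expect to be the real obstacle is the lower bound, specifically the claim that a walk in a tree, between two consecutive visits to a vertex $w$, must return along the edge by which it left. This single-edge-attachment property of $G-w$ is exactly where ``tree'' (rather than merely ``connected graph'') is used; once it is in place, the rest is bookkeeping with the handshake lemma and a routine splicing argument for the construction.
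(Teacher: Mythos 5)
Your proof is correct. The upper bound is essentially the paper's construction (a star of transitions through one fixed incident edge at each vertex, traversed with a detour-and-backtrack at each internal vertex of the $u$--$v$ path), but your lower bound takes a genuinely different route. The paper argues by induction on $|V(G)|$: every internal edge of the tree must lie in at least two transitions, and otherwise one peels off a leaf whose pendant edge lies in exactly one transition. You instead argue directly and locally via the link $L_w$ at each vertex: $T$-connectivity of a tree forces every $L_w$ to be connected, because each component of $G-w$ is attached to $w$ by a single edge, so a walk must re-enter $w$ along the edge it last left by, which chains all the edges it uses at $w$ into one component of $L_w$; connectivity then gives $|E(L_w)|\geq d(w)-1$, and the handshake lemma together with $|E(G)|=|V(G)|-1$ finishes. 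Your approach is non-inductive and isolates the exact structural invariant (for a tree, $T$-connectivity is equivalent to all links being connected), which certifies the construction and the lower bound in one stroke; the paper's induction is more self-contained but conceals this local picture. The only point to tidy is the degenerate visits in your chaining argument (when the walk starts or ends at $w$ there is only a departure or only an arrival edge), which changes nothing.
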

	\begin{proof}
		We first prove that $|V(G)|-2$ transitions are enough to connect $G$. For every vertex $v$ of $G$, we pick a neighbor of $v$ that we call $f(v)$. For every neighbor $u\neq f(v)$ of $v$, we allow the transition $uvf(v)$. We end up with the transition set $T=\{uvf(v) : v\in V(G), u\in N(v)\setminus \{f(v)\}\}$.
		Let $u$ and $v$ be vertices of $G$. Since $G$ is connected, there exists a walk $(u,u_1,u_2,\dots,u_k,v)$. The walk $(u,u_1,f(u_1),u_1,u_2,f(u_2),u_2,\dots,u_k,f(u_k),u_k,v)$ is $T$-compatible and still leads from $u$ to $v$. This proves that $G$ is $T$-connected.
		The size of $T$ is $|T|=\sum_{v\in V(G)} (d(v)-1)=2|E(G)|-|V(G)|$. Since $G$ is a tree, $|E(G)|=|V(G)|-1$ and thus, $|T|=|V(G)|-2$.
		
		Let us now prove by induction on the number $n$ of vertices of $G$ that at least $n-2$ transitions are necessary to connect $G$. 
		This is obvious for $n=2$. Let us assume that it holds for $n$ and let $G$ be a tree with $n+1$ vertices. 
		Let $T$ be a minimum connecting transition set of $G$.
		Let $uv$ be an internal edge of $T$ if any (\ie an edge such that $u$ and $v$ are not leaves). 
		Let $a$ and $b$ be two vertices from different connected components of $G-\{u,v\}$.
		Every walk leading from $a$ to $b$ in $G$ therefore uses the edge $uv$ and thus, two transitions containing $uv$. 
		This proves that every internal edge of $T$ belongs to at least two transitions of $T$. 
		If every edge of $G$ belongs to at least two transitions of $T$, $T$ has size at least $|E(G)|=|V(G)|-1$ which concludes the proof. 
		Otherwise, let $uv$ be an edge that belongs to at most one transition of $T$. This means that one of its vertices, say $v$, is a leaf. 
		It is straightforward to check that $uv$ must belong to one transition of $T$, otherwise $G$ would not be $T$-connected. Let $t$ be the transition in $T$ containing $uv$.
		The graph $G-v$ is $T\setminus\{t\}$-connected and is a tree. By the induction hypothesis, this means that $|T\setminus\{t\}|\geqslant n-3$ and $|T|\geqslant n-2$. This concludes the proof of the lemma.
		\qed
	\end{proof}
	
	Let us also note that a linear-time algorithm to compute an optimal solution can be easily deduced from this proof.
	Since every connected graph contains a spanning tree, we have the following corollary. 
	\begin{corollary}\label{cor:n-2}
		Every connected graph $G$ has a connecting transition set of size $|V(G)|-2$.
	\end{corollary}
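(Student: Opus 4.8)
The plan is to derive this statement directly from Lemma~\ref{lem:tree} by passing to a spanning tree. First I would fix an arbitrary spanning tree $H$ of $G$; such a tree exists because $G$ is connected, and crucially $V(H)=V(G)$. Applying Lemma~\ref{lem:tree} to $H$ yields a connecting transition set $T$ of $H$ with $|T|=|V(H)|-2=|V(G)|-2$.

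The second step is to check that $T$ is in fact a legitimate transition set of $G$ and that it still connects $G$. Each element of $T$ is a pair of adjacent edges of $H$; since $E(H)\subseteq E(G)$, these two edges are also adjacent edges of $G$, so $T$ is a set of transitions of $G$. Likewise, every walk in $H$ is a walk in $G$ (the vertex sequence and all its consecutive edges lie in $G$), so any $T$-compatible walk in $H$ is automatically a $T$-compatible walk in $G$.

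Finally, because $H$ is $T$-connected, for every pair $u,v\in V(H)=V(G)$ there is a $T$-compatible walk from $u$ to $v$ in $H$, hence in $G$. Thus $G$ is $T$-connected and $T$ is a connecting transition set of $G$ of size $|V(G)|-2$, as claimed. There is essentially no obstacle here: the corollary is immediate once one observes that restricting to a spanning subgraph can only decrease the available edges and walks, so the witnessing set produced by Lemma~\ref{lem:tree} transfers verbatim; the single point meriting a line of justification is that transitions and walks of $H$ remain valid in $G$, which follows at once from $E(H)\subseteq E(G)$.
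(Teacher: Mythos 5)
Your proposal is correct and follows exactly the route the paper intends: the paper derives the corollary in one line from the existence of a spanning tree together with Lemma~\ref{lem:tree}, and your argument is just that same derivation with the (easy but legitimate) verification that transitions and walks of the spanning tree remain valid in $G$ spelled out explicitly.
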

	
	Note however that in the general case, this bound is far from tight.
	The most extreme case is the complete graph where every vertex can be connected to every other with a walk of one edge, that therefore uses no transition. 
	Thus, the empty set is a connecting transition set of the complete graph. The following result aims at tightening the upper bound on the size of the minimum connecting transition set of a graph.
	
	\begin{theorem}\label{thm:tau}
		Every connected graph $G$ has a connecting transition set of size $\tau(G)$ where
		
		\[\tau(G)=
		\dsum{\underset{|C|\geqslant 2}{C\text{ }\mathrm{co}\text{-}\mathrm{cc}\text{ }\mathrm{of}\text{ }G}}{} 
		\left\{
		\begin{split}
		|C|-2&\text{ if the subgraph of }G\text{ induced by }C\text{ is connected}\\
		|C|-1&\text{ otherwise }
		\end{split}
		\right.\]
	\end{theorem}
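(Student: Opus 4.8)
The plan is to build a connecting transition set component by component, exploiting one elementary observation about $\bar G$: if $C$ and $C'$ are two distinct co-cc's of $G$, then $\bar G$ has no edge between $C$ and $C'$, so in $G$ \emph{every} vertex of $C$ is adjacent to \emph{every} vertex of $C'$. Consequently, any two vertices lying in different co-cc's are joined by a walk of length one, which uses no transition and is therefore $T$-compatible for any $T$. So to make $G$ be $T$-connected it suffices to pick $T$ so that, for each co-cc $C$, every pair of vertices of $C$ is joined by a $T$-compatible walk of $G$. (When $G$ has a single co-cc this reduction is trivial, since that co-cc is all of $V(G)$; when it has several, the cross-component pairs cost nothing.)

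Next I handle a fixed co-cc $C$ with $|C|\ge 2$. If $G[C]$ is connected, then Corollary~\ref{cor:n-2} applied to $G[C]$ gives a connecting transition set $T_C$ of $G[C]$ with $|T_C|=|C|-2$; every walk realizing the $T_C$-connectivity of $G[C]$ stays inside $C$, hence inside $G$. If $G[C]$ is \emph{not} connected, then necessarily $C\subsetneq V(G)$ (otherwise $G=G[C]$ would be disconnected), so there is a vertex $w\notin C$; by the observation above $w$ is adjacent to all of $C$, so $G[C\cup\{w\}]$ is a connected graph on $|C|+1$ vertices. Corollary~\ref{cor:n-2} applied to it yields a connecting transition set $T_C$ of $G[C\cup\{w\}]$ with $|T_C|=(|C|+1)-2=|C|-1$, and any two vertices of $C$ are again joined by a $T_C$-compatible walk of $G[C\cup\{w\}]\subseteq G$. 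For singleton co-cc's we set $T_C=\emptyset$.

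Then I set $T=\bigcup_{C}T_C$ over all co-cc's $C$. By the reduction, $T$ is a connecting transition set of $G$. Its size is $\sum_C|T_C|=\tau(G)$, using that the sets $T_C$ are pairwise disjoint — a routine check, since each transition of $T_C$ involves only vertices of $C$ plus at most one extra hub vertex, and distinct co-cc's are vertex-disjoint. (If one only wants the upper bound, $|T|\le\tau(G)$ is immediate and $T$ may be padded with arbitrary transitions.) In the degenerate case where $G$ has exactly one co-cc, this specializes to $G[V(G)]=G$ being connected, and the first case alone gives $\tau(G)=|V(G)|-2$, recovering Corollary~\ref{cor:n-2}.

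The one point that really needs care — and the step I expect to be the crux — is the disconnected case: one must notice that a co-cc $C$ with $G[C]$ disconnected is necessarily a \emph{proper} subset of $V(G)$, which is exactly what guarantees the existence of a hub vertex $w$ outside $C$, and, crucially, that this $w$ is adjacent to \emph{all} of $C$, so that the possibly-disconnected $G[C]$ becomes the connected $G[C\cup\{w\}]$ at the cost of a single extra vertex (hence a single extra transition, explaining the ``$+1$'' in the $|C|-1$ term). Everything else is bookkeeping: applying Corollary~\ref{cor:n-2} componentwise, observing that inter-component pairs need no transition, and verifying the counts sum to $\tau(G)$.
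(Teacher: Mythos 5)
Your proof is correct and follows essentially the same route as the paper's: observe that vertices in distinct co-connected components are adjacent in $G$ and hence need no transitions, then apply Corollary~\ref{cor:n-2} to $G[C]$ when it is connected and to $G[C\cup\{w\}]$ for a dominating outside vertex $w$ otherwise. Your extra remarks on the disjointness of the sets $T_C$ and on why $w$ exists are just more explicit versions of steps the paper also takes.
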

	
	\begin{proof}
		By definition, if $u$ and $v$ belong to different co-connected components of $G$, there is an edge $uv\in E(G)$ and there is therefore a walk between $u$ and $v$ is compatible with any transition set. 
		We only have to find a transition set that connects all the vertices that belong to the same co-connected component.
		
		Let $C$ be a co-connected component of $G$ with at least 2 vertices. 
		If $G[C]$ is connected, Corollary \ref{cor:n-2} provides a transition set of size $|C|-2$ that connects $C$. Otherwise, since $G$ is connected, we know that $V(G)\neq C$ and there exists a vertex $v\notin C$. 
		Hence, $v$ is adjacent to every vertex of $C$ and $C\cup\{v\}$ induces a connected subgraph of $G$. Corollary \ref{cor:n-2} provides a set of size $|C\cup\{v\}|-2=|C|-1$ that connects $C$. 
		By iterating this on every $C$, we build a connecting transition set $T$ of size $\tau(G)$.
	\qed
	\end{proof}
	
	Note that this bound can be computed in $O(|V(G)|^2)$. 
	However, this bound is still not tight. 
	Let us consider the graph $\bar{P_7}$ whose vertex set is $\{v_1,\dots,v_7\}$ and where every vertex $v_i,\, 2\leqslant i \leqslant 6$ is connected to every vertex of the graph but $v_{i-1}$ and $v_{i+1}$. 
	Since the graph is connected and co-connected, $\tau(\bar{P_7})=5$ but the set $T=\{v_3v_1v_4, v_2v_4v_1, v_6v_4v_7, v_5v_7v_4\}$ is a connecting transition set of size only 4. 
	To better understand this solution, let us consider the spanning tree of $\bar{P_7}$ depicted in Figure \ref{spantree}:
	
	\begin{figure}[h]
\begin{center}
	\begin{pspicture}(2.4,1.9)
		
		\psline{-}(0.2,0.2)(0.7,0.7)
		\psline{-}(1.2,1.2)(0.7,0.7)
		\psline{-}(1.2,1.2)(1.7,0.7)
		\psline{-}(1.7,0.7)(2.2,0.2)
		\psline{-}(1.2,1.2)(1.7,1.7)
		\psline{-}(1.2,1.2)(0.7,1.7)
		
		\pscircle[linecolor=black, fillstyle=solid, fillcolor=white](0.2,0.2){0,25}
		\pscircle[linecolor=black, fillstyle=solid, fillcolor=white](0.7,0.7){0,25}
		\pscircle[linecolor=black, fillstyle=solid, fillcolor=white](1.2,1.2){0,25}
		\pscircle[linecolor=black, fillstyle=solid, fillcolor=white](1.7,0.7){0,25}
		\pscircle[linecolor=black, fillstyle=solid, fillcolor=white](2.2,0.2){0,25}
		\pscircle[linecolor=black, fillstyle=solid, fillcolor=white](0.7,1.7){0,25}
		\pscircle[linecolor=black, fillstyle=solid, fillcolor=white](1.7,1.7){0,25}
		
		\rput(0.2,0.2){$v_3$}
		\rput(0.7,0.7){$v_1$}
		\rput(1.2,1.2){$v_4$}
		\rput(1.7,0.7){$v_7$}
		\rput(2.2,0.2){$v_5$}
		\rput(0.7,1.7){$v_2$}
		\rput(1.7,1.7){$v_6$}
		
		\end{pspicture}
		\caption{A spanning tree of $\bar{P_7}$.}\label{spantree}
		\end{center}
\end{figure}
	Note that the set $T$ described above does not connect this spanning tree. 
	Indeed, one can not go from $v_1$, $v_2$ or $v_3$ to $v_5$, $v_6$ or $v_7$ using a $T$-compatible walk in the tree. 
	However, these vertices are already connected to each other by edges that do not belong to the spanning tree. 
	The optimal solution here does not consist in connecting a spanning tree of $G$ but in connecting  a spanning tree of $G[\{v_1,v_2,v_3,v_4\}]$ and one of $G[\{v_4,v_5,v_6,v_7\}]$ and the cost is $(4-2)+(4-2)=4$ instead of $7-2=5$. 
	
	In fact, we will prove that to each optimal connecting transition set $T$ of a graph $G$ corresponds an unique decomposition of $G$ into subgraphs $G_1,G_2,\dots,G_k$ such that $T$ is the disjoint union of $T_1,T_2,\dots,T_k$, where each $T_i$ is the connecting transition set of some spanning tree of $G_i$.
	Observe that the size of $T$ is uniquely determined by its correspondent decomposition, \ie,  $|T| = |V(G_1)| -2 + \dots + |V(G_k)| -2$.
	Hence, finding an optimal connecting transition set is equivalent to finding its correspondent decomposition.
	In the following, we reformulate \textbf{MCTS} into this problem of graph decomposition which is easier to work with.

	\begin{definition}{Connecting Hypergraph}
	
		Let $G$ be a graph. A connecting hypergraph of $G$ is a set $H$ of subsets of $V(G)$, such that
		\begin{itemize}
			\item For all $E\in H$,  we have $G[E]$ is connected and $|E|\geqslant 2$.
			\item For all $uv \notin E(G)$, there exists $E\in H$ such that $u,v \in E$ (we say that the hyperedge $E$ connects $u$ and $v$).
		\end{itemize}
	\end{definition}
	
	We define the problem of optimal connecting hypergraph as follows:
	\pbDef{Optimal Connecting HyperGraph (OCHG)}
	{A connected graph $G$.}
	{A connecting hypergraph $H$ that minimizes $\mathrm{cost}(H)=\sum_{E\in H}{(|E| - 2) }$.}
	
	In the next theorem, we prove that \textbf{OCHG} is a reformulation of \textbf{MCTS}.

	\begin{theorem}\label{thm:equipb}
		Let $G$ be a graph.
		\begin{itemize}
			\item The size of a minimum connecting transition set of $G$ is the same as the cost of an optimal connecting hypergraph.
			\item A solution of one of these problems on $G$ can be deduced in polynomial time from a solution of the other.
		\end{itemize}	
	\end{theorem}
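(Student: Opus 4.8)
The plan is to establish the equivalence by proving two inequalities between the optimal values, each accompanied by a polynomial-time transformation of solutions. First I would show that from any connecting transition set $T$ of $G$ one can build a connecting hypergraph $H$ with $\cost(H) \le |T|$, and conversely that from any connecting hypergraph $H$ one can build a connecting transition set $T$ with $|T| \le \cost(H)$. Applying the first construction to an optimal $T$ and the second to an optimal $H$ then forces the optimal cost of \textbf{OCHG} and the optimal size of \textbf{MCTS} to coincide; and since each construction sends an optimum to an optimum, the second bullet of the statement follows as well. Both constructions will clearly be polynomial.

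The direction from hypergraphs to transition sets is the easy one. Given a connecting hypergraph $H$, every hyperedge $E\in H$ induces a connected graph $G[E]$ on $|E|\ge 2$ vertices, so Corollary \ref{cor:n-2} supplies a connecting transition set $T_E$ of $G[E]$ with $|T_E| = |E|-2$ (computable in linear time). Put $T = \bigcup_{E\in H} T_E$. For non-adjacent $u,v$, pick $E\in H$ with $u,v\in E$: a $T_E$-compatible walk from $u$ to $v$ inside $G[E]$ is, since $T_E\subseteq T$, also a $T$-compatible walk in $G$, so $G$ is $T$-connected. As $|T|\le \sum_{E\in H}|T_E| = \cost(H)$, this half is done.

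For the harder direction I would introduce the auxiliary graph $L_T$ with vertex set $E(G)$ in which two edges $e,f$ of $G$ are joined exactly when $\{e,f\}\in T$. Its point is that a $T$-compatible walk in $G$ of length at least $2$ projects to a walk in $L_T$: a step $v_iv_{i+1}v_{i+2}$ which is a genuine transition of $T$ is an edge of $L_T$ joining $v_iv_{i+1}$ and $v_{i+1}v_{i+2}$, and a backtracking step $v_i=v_{i+2}$ just keeps us at the same vertex of $L_T$. Let $C_1,\dots,C_m$ be the connected components of $L_T$ and set $V_i = \bigcup_{e\in C_i} e \subseteq V(G)$. Then $(V_i,C_i)$ is a connected subgraph of $G$, so each $G[V_i]$ is connected with $|V_i|\ge 2$; and since $G$ is $T$-connected, for every non-edge $uv$ some $T$-compatible walk from $u$ to $v$ projects to a walk in $L_T$ between an edge containing $u$ and one containing $v$, which puts $u$ and $v$ into a common $V_i$. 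Hence $H=\{V_1,\dots,V_m\}$ is a connecting hypergraph, obtainable from $T$ in polynomial time.

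It remains to bound $\cost(H)$, and this is the crux of the theorem. Let $T_i$ be the set of transitions of $T$ lying inside $C_i$; since every transition of $T$ is an edge of $L_T$, we have $T = \bigsqcup_i T_i$ and $|T| = \sum_i |T_i|$. Now $(V_i,C_i)$ is connected, so $|C_i|\ge |V_i|-1$, and $(C_i,T_i)$ is connected, being a component of $L_T$, so $|T_i|\ge |C_i|-1 \ge |V_i|-2$. Summing over $i$ yields $\cost(H) = \sum_i (|V_i|-2) \le \sum_i |T_i| = |T|$. The main obstacle is exactly to hit upon the right auxiliary object and to see this nested counting estimate --- that each component of $L_T$ contains at least (number of vertices it covers) $-\,2$ transitions, which is precisely the cost charged to the corresponding hyperedge; the rest is routine.
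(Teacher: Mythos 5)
Your proof is correct, and its overall architecture matches the paper's: one claim per direction, with the hypergraph-to-transition-set direction handled via Corollary \ref{cor:n-2} exactly as in the paper, and the converse handled by grouping the transitions of $T$ into classes glued along shared edges. Indeed, your components of $L_T$ coincide (up to the harmless inclusion of trivial components, i.e.\ single edges of $G$ lying in no transition, which yield hyperedges of size $2$ and cost $0$) with the paper's equivalence classes: the paper takes the transitive closure of ``$t\sim t'$ iff $t$ and $t'$ share an edge'' on $T$, which is precisely the component structure of the subgraph of the line graph of $G$ spanned by $T$, i.e.\ your $L_T$. Where you genuinely diverge is in how the key inequality $|E_i|-2\leqslant |T_i|$ is established. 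The paper proves it by induction on $|T_i|$, peeling off an endpoint $t_1$ of a maximal sequence of pairwise $\sim$-related transitions and asserting that the remaining transitions stay pairwise equivalent --- a step that silently relies on the fact that an endpoint of a maximal path is never a cut vertex. You replace this induction by two applications of the elementary fact that a connected graph on $p$ vertices has at least $p-1$ edges: once inside the component $(C_i,T_i)$ of $L_T$ to get $|T_i|\geqslant|C_i|-1$, and once in the connected spanning subgraph $(V_i,C_i)$ of $G[V_i]$ to get $|C_i|\geqslant|V_i|-1$. This is shorter, avoids the delicate induction entirely, and makes transparent where the ``$-2$'' comes from: one ``$-1$'' for each level of the incidence chain vertices--edges--transitions. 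Both routes give the same polynomial-time translations between optima, so your argument fully establishes the theorem.
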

	\begin{proof}
		Let $G$ be a graph. This theorem is implied by the two following claims.
		\begin{claim}\label{claim:equi1}
			Let $H=\{E_1,\dots,E_k\}$ be a connecting hypergraph of $G$. There exists a connecting transition set $T$ of size  at most $\mathrm{cost}(H)$.
		\end{claim}
			By the definition of a connecting hypergraph, each $E_i$ induces a connected graph and by Corollary \ref{cor:n-2}, there exists a subset of transitions $T_i$ of size $|E_i|-2$ such that $G[E_i]$ is $T_i$-connected. 
			Let $T=\bigcup_{i\leqslant k}T_i$. 
			By definition, for all $uv\notin E(G)$, there exists $i$ such that $u,v\in E_i$. 
			Since $G[E_i]$ is $T_i$-connected and $T_i\subseteq T$, there is a $T$-compatible walk between $u$ and $v$ in $G$ which means that $G$ is $T$-connected. 
			Since $T=\bigcup_{i\leqslant k}T_i$, $|T|\leqslant\sum_{i\leqslant k}|T_i|=\sum_{i\leqslant k}(|E_i|-2)=\cost(H)$.

		\begin{claim}\label{claim:equi2}
			Let $T$ be a connecting transition set of $G$. There exists a  connecting hypergraph $H=\{E_1,\dots,E_k\}$ of cost at most $|T|$.
		\end{claim}
			Let $\sim$ be the relation on $T$ such that $t\sim t'$ if $t$ and $t'$ share at least one common edge. 
			We denote by $\cR$ the transitive closure of $\sim$. 
			Let $T_1,\dots, T_{k}$ be the equivalence classes of $\cR$. 
			For all $i\leqslant k$, we denote by $E_{i}$ the set of vertices induced by $T_i$. 
			We claim that the hypergraph $\{E_1,\dots,E_k\}$ is a connecting hypergraph and that, for all $i$, $|T_i|\geqslant |E_i|-2$. 
			
			By construction, for all $i$, we have $|E_i|\geqslant 3$ since $T_i$ contains at least one transition and thus, three vertices.
			Furthermore, since $G$ is $T$-connected, there exists a $T$-compatible walk $W$ between every pair $uv\notin E(G)$. All the transitions that $W$ uses must be in $T$ and are pairwise equivalent for $\cR$. 
			Thus, for all $uv\notin E(G)$, there exists $i$ such that both $u$ and $v$ belong to $E_i$.
			
			It remains to prove that for all $i$, $|E_i|-2\leqslant |T_i|$. 
			We prove by induction on $n$ that every set $T$ of $n$ pairwise equivalent transitions induces a vertex set of size at most $n+2$. This property trivially holds for $n=1$.
			Now, suppose that it is true for sets of size $n$ and let $T$ be a set of pairwise equivalent transitions of size $n+1$.  
			Let $P=t_1,\dots,t_r$ be a maximal sequence of distinct transitions of $T$ such that, for all $i\leqslant r-1$, $t_i \sim t_{i+1}$. 
			One can check that all the transitions of $T\setminus \{t_1\}$ are still pairwise equivalent 
			(otherwise, $P$ would not be maximal).
			By the induction hypothesis, $T\setminus \{t_1\}$ induces at most $n+2$ vertices. 
			Since $t_1$ shares an edge (and thus at least 2 vertices) with $t_2$, it induces at most one vertex not induced by $T\setminus \{t_1\}$. 
			Thus $T$ induces at most $n+3$ vertices.
	\qed
	\end{proof}

	Let us note that the bound provided in Theorem \ref{thm:tau} suggests a $O(|V|^2)$-time heuristic for \textbf{OCHG} which consists in building the set $H$ as follows: 
	
	\[H=
	\dcup{\underset{|C|\geqslant 2}{C\text{ }\mathrm{co}\text{-}\mathrm{cc}\text{ }\mathrm{of}\text{ }G}}{} 
	\left\{
	\begin{split}
	C&\text{ if the subgraph of }G\text{ induced by }C\text{ is connected}\\
	C&\cup \{v\}\text{ with }v\notin C\text{ otherwise }
	\end{split}
	\right.\]
	
	We use the reformulation given by Theorem \ref{thm:equipb} to generalize Lemma \ref{lem:tree}:
	
	\begin{lemma}\label{lem:ptsarticu}
		If $G$ has a cut vertex, then a minimum connecting transition set of $G$ has size $|V(G)|-2$.
	\end{lemma}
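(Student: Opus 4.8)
The plan is to establish the matching lower and upper bounds separately. The upper bound is free: by Corollary~\ref{cor:n-2}, $G$ has a connecting transition set of size $|V(G)|-2$. So the real work is the lower bound, and for that I would pass to the hypergraph formulation via Theorem~\ref{thm:equipb}: it suffices to show that every connecting hypergraph $H$ of $G$ satisfies $\cost(H) \geq |V(G)| - 2$.

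First I would fix a cut vertex $v$, take $A$ to be the vertex set of one connected component of $G - v$ and set $B = V(G) \setminus (A \cup \{v\})$, so that $A, B \neq \emptyset$ and $|A| + |B| = |V(G)| - 1$. Two observations drive everything: every pair $a \in A$, $b \in B$ is a non-edge of $G$ (an edge would reconnect the components of $G - v$); and any hyperedge $E \in H$ meeting both $A$ and $B$ must contain $v$, since $G[E]$ is connected and every path of $G$ from $A$ to $B$ runs through $v$. Call such a hyperedge \emph{crossing}; for a crossing $E$, writing $E_A = E \cap A$ and $E_B = E \cap B$, the set $E$ is the disjoint union $E_A \sqcup E_B \sqcup \{v\}$, so it contributes exactly $|E_A| + |E_B| - 1$ to $\cost(H)$. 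I would then discard the non-crossing hyperedges (their cost is nonnegative) and be left to lower-bound $\sum_{E \text{ crossing}} (|E_A| + |E_B| - 1)$.

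The key step is to recognize this as a biclique-covering bound. Since every non-edge $ab$ with $a \in A$, $b \in B$ lies in some (necessarily crossing) hyperedge, the bicliques $E_A \times E_B$ over all crossing $E$ cover $A \times B$; in particular the sets $E_A$ exhaust $A$ and the sets $E_B$ exhaust $B$. I would pick a spanning tree of each complete bipartite graph with parts $E_A, E_B$ — it has $|E_A| + |E_B| - 1$ edges — and form the union multigraph $\Gamma$ on vertex set $A \cup B$. Any $a \in A$ and $b \in B$ lie together in some crossing hyperedge, hence in a common spanning tree, hence in the same component of $\Gamma$; because $A$ and $B$ are nonempty this forces $\Gamma$ to be connected, so $|E(\Gamma)| \geq |A| + |B| - 1 = |V(G)| - 2$. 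Since $|E(\Gamma)| = \sum_{E \text{ crossing}}(|E_A| + |E_B| - 1) \leq \cost(H)$, we obtain $\cost(H) \geq |V(G)| - 2$, and together with the upper bound the lemma follows.

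I expect the only nontrivial point to be spotting the reduction to biclique covering and the "union of spanning trees is connected" trick that makes the covering bound immediate; everything else (the two structural observations about the cut vertex, the disjoint-union bookkeeping) is routine. It is worth noting that this argument strictly generalizes the lower bound of Lemma~\ref{lem:tree}, since every internal vertex of a tree is a cut vertex.
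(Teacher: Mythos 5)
Your proof is correct, but it takes a genuinely different route from the paper's. The paper also reduces to the hypergraph formulation, but then runs an exchange argument: starting from an arbitrary optimal connecting hypergraph, it observes that any two hyperedges joining a fixed $a\in C_1$ to vertices outside $C_1$ must both contain the cut vertex $p$, hence overlap in at least two vertices and can be merged without increasing the cost; iterating, the optimal solution is transformed into $\{V(G)\}$, whose cost is $|V(G)|-2$. You instead prove the lower bound in one shot, without modifying $H$: every non-edge between $A$ and $B$ must be covered by a "crossing" hyperedge $E\ni v$ contributing exactly $|E_A|+|E_B|-1=|E|-2$ to the cost, and gluing a spanning tree of each complete bipartite graph $K_{E_A,E_B}$ yields a connected multigraph on $|V(G)|-1$ vertices, so $\cost(H)\geqslant |V(G)|-2$. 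Both arguments are sound (I checked in particular that your $\Gamma$ is indeed connected: any two vertices of $A$, or of $B$, are joined through a common vertex of the other side, which is nonempty). The paper's version additionally exhibits the structure of an optimal solution, namely that $\{V(G)\}$ is optimal, at the price of an iteration whose termination and invariants are left somewhat informal; your counting argument is more self-contained and, as you note, subsumes the lower-bound half of Lemma~\ref{lem:tree}, since any tree on at least three vertices has a cut vertex.
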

	
	\begin{proof}
		By Theorem \ref{thm:equipb}, it is sufficient to prove that $H=\{V(G)\}$ is an optimal connecting hypergraph of $G$. 
		Let $p$ be a cut vertex of $G$ and $C_1,\dots,C_r$ be the connected components of $G-p$. 
		Let $H=\{E_1,\dots,E_k\}$ be an optimal connecting hypergraph of $G$. 
		
		Let $a\in C_1$. Suppose that there are two vertices $b,c\neq p$ that do not belong to $C_1$. 
		Hence, $\{a,b\}\notin E(G)$ and there exists $i$ such that $a,b\in E_i$. 
		Since $E_i$ must induce a connected subgraph of $G$, we know that $p\in E_i$. 
		Similarly, we know that there exists $E_j$ that contains $a,c$ and $p$. 
		Thus, we have $|E_i\cap E_j|\geqslant |\{a,p\}|\geqslant 2$ and $\cost(\{E_i\cup E_j\})= |E_i\cup E_j| - 2 \leqslant |E_i| -2 + |E_j| -2=\cost(\{E_i,E_j\})$. 
		
		Thus, $H\setminus \{E_i,E_j\}\cup \{E_i\cup E_j\}$ is also an optimal connecting hypergraph where the same hyperedge contains both $b$ and $c$. 
		By iterating this process, we prove that there is an optimal connecting hypergraph with one hyperedge $E$ that contains $a$, $p$ and $C_2,\dots,C_r$. This result trivially holds if there is only one vertex $b\neq p$ that does not belong in $C_1$.
		By iterating the previous process on this hypergraph with a vertex in $E\cap C_2$ instead of $a$, we end up with the optimal connecting hypergraph  $ \{V(G)\}$ whose cost is $n-2$.
	\qed
	\end{proof}
	
	The following lemma will help us to prove that \textbf{MCTS} admits a $\frac{3}{2}$-approximation and its NP-hardness.
	It proves that if the graph is co-connected, we can restrict ourselves to some specific connecting hypergraph.
	
	\begin{lemma}\label{lem:co-connected}
		Let $G$ be a connected graph. If $G$ is co-connected or $G$ has a dominating vertex $x$ and $G-x$ is connected and co-connected,
		then there exists an optimal connecting hypergraph $H=\{E_1,\dots,E_k\}$ on $G$ such that for all $i$, $G[E_i]$ is co-connected.
	\end{lemma}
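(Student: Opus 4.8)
The plan is to take an arbitrary optimal connecting hypergraph and repair it by a cost‑preserving exchange argument until every hyperedge induces a co‑connected subgraph; the case $|V(G)|=2$ (i.e.\ $G=K_2$) is trivial since the empty hypergraph is optimal, so assume $|V(G)|\geq 3$. Among all optimal connecting hypergraphs of $G$ I would pick one, $H=\{E_1,\dots,E_k\}$, minimizing the number of indices $i$ for which $G[E_i]$ is \emph{not} co‑connected (breaking ties by $\sum_i|E_i|$), and then assume for contradiction that some $E:=E_i$ has $G[E]$ not co‑connected. Writing $D_1,\dots,D_m$ ($m\geq 2$) for the co‑connected components of $G[E]$, the one structural fact I would use repeatedly is that for $a\neq b$ every vertex of $D_a$ is adjacent in $G$ to every vertex of $D_b$ (there are no $\bar G$‑edges between distinct components of $\bar G[E]$); hence $G[E]$ is the join of the graphs $G[D_a]$, and every non‑edge of $G$ with both endpoints in $E$ has both endpoints inside a single $D_a$.

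Before exchanging I would clear away two degenerate situations. First, $E\neq V(G)$: if $G$ is co‑connected then $G[V(G)]=G$ is co‑connected, so a bad $E$ cannot be all of $V(G)$; and if $G$ has a dominating vertex $x$ with $G-x$ connected and co‑connected, then $\{V(G)\setminus\{x\}\}$ is already a connecting hypergraph (it misses no non‑edge, since $x$ lies in none, and $G-x$ is connected) of cost $|V(G)|-3$, so an optimal $H$ contains no hyperedge equal to $V(G)$ at all. Thus $V(G)\setminus E\neq\emptyset$. Second, if every $D_a$ is a singleton then $G[E]$ is a clique, $E$ covers no non‑edge, and $H\setminus\{E\}$ is a connecting hypergraph of no greater cost with fewer bad hyperedges — contradicting the choice of $H$. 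So I may assume $D_1,\dots,D_p$ with $p\geq 1$ are the non‑singleton co‑cc's, and let $s=m-p$ be the number of singleton ones.

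Now I would replace $E$ by hyperedges $E^{(1)},\dots,E^{(p)}$: for each $a\leq p$, if $G[D_a]$ is connected set $E^{(a)}=D_a$ (it is co‑connected, being a co‑cc of $G[E]$); if $G[D_a]$ is disconnected, choose a vertex $w_a\in V(G)\setminus E$ such that $G[D_a\cup\{w_a\}]$ is connected and co‑connected, and set $E^{(a)}=D_a\cup\{w_a\}$. Put $H'=(H\setminus\{E\})\cup\{E^{(1)},\dots,E^{(p)}\}$. By the structural fact $H'$ is again a connecting hypergraph (a non‑edge inside $E$ lies inside some $D_a\subseteq E^{(a)}$, and all other non‑edges were covered already), each $E^{(a)}$ induces a connected, co‑connected graph on $\geq 2$ vertices, and a short computation gives $\cost(H')-\cost(H)=\big(\#\{a\leq p:G[D_a]\text{ disconnected}\}\big)-(s+2p-2)\leq p-(s+2p-2)=2-m\leq 0$, using $m=s+p\geq 2$. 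Hence $H'$ is optimal and has strictly fewer bad hyperedges than $H$, contradicting the choice of $H$ and proving the lemma.

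The real content — and the only place the hypothesis on $G$ is used beyond guaranteeing $E\neq V(G)$ — is the claim invoked above: if $D_a$ is a disconnected co‑connected component of a bad hyperedge $E\neq V(G)$, then some $w\in V(G)\setminus E$ has a $G$‑neighbour in every connected component of $G[D_a]$ and also a $G$‑non‑neighbour in $D_a$. Such a $w$ re‑links the components of $G[D_a]$ inside $G$, and since it has a non‑neighbour in $D_a$ it attaches in the complement to the connected graph $\bar G[D_a]$, so $G[D_a\cup\{w\}]$ is indeed connected and co‑connected. A connector must be taken outside $E$: every vertex of $E\setminus D_a$ is adjacent to all of $D_a$ and would become an isolated vertex of $\bar G[D_a\cup\{w\}]$. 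I would establish the existence of $w$ from the fact that $\bar G$ is connected while $\bar G[D_a]$ is connected (so the components of $G[D_a]$ are forced to be "spread out" in $\bar G$ through $V(G)\setminus E$), together with connectivity of $G$; this step, and the book‑keeping for the tight budget cases $m=2$, is where I expect essentially all of the difficulty to lie.
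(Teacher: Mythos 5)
Your overall strategy (pick an optimal hypergraph minimizing the number of non-co-connected hyperedges, then repair one bad hyperedge at no extra cost) has the right shape, and your cost bookkeeping is internally consistent, but the entire argument rests on the existence claim you defer to the last paragraph, and that claim is false. You need a \emph{single} vertex $w\in V(G)\setminus E$ with a $G$-neighbour in every connected component of $G[D_a]$ and a $G$-non-neighbour in $D_a$. If $D_a$ is an independent set of $G$ (all components of $G[D_a]$ are singletons) --- which is the generic case, e.g.\ $E=\{a,b,c\}$ with $ab\notin E(G)$ and $c$ adjacent to both, where $D_a=\{a,b\}$ --- then any $w$ meeting every component of $G[D_a]$ dominates $D_a$, hence is isolated in $\bar{G}[D_a\cup\{w\}]$ and co-connectivity fails; and there need not even exist a vertex outside $E$ adjacent to all components. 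Concretely, let $G=\bar{P_5}$ with $\bar{G}$ the path $1\,2\,3\,4\,5$. Then $H=\{\{1,2,3,4\},\{2,4,5\}\}$ is an optimal connecting hypergraph (cost $3$, and cost $2$ is impossible since a size-$3$ hyperedge covers at most one non-edge and a size-$4$ one cannot contain all five vertices), the hyperedge $\{2,4,5\}$ is bad with disconnected co-cc $D=\{4,5\}$, and neither vertex of $V(G)\setminus E=\{1,3\}$ works: $3$ is not adjacent to $4$, and $1$ dominates $\{4,5\}$. Since your estimate $\cost(H')-\cost(H)\leq 2-m\leq 0$ leaves room for only one added vertex per disconnected $D_a$, the argument cannot be rescued by adding several connectors: a path of new vertices through $G$ blows the budget.

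This is precisely the difficulty the paper's proof is built around. Instead of introducing fresh vertices, it takes a non-edge $uv$ with $u\in C$ and $v\notin E$ --- such a pair exists because $C$ is not a co-cc of $G$, and this is where the hypothesis on $G$ is genuinely used --- and merges $C$ into the hyperedge $E'$ of $H$ that already covers $uv$, gluing in further existing hyperedges $E_1,\dots,E_l$ to restore connectivity of the union; each glue is free because the overlaps have size at least two, and the $|C|$ saved by shrinking $E$ to $E\setminus C$ pays for enlarging $E'$ by $C$. It must then run the same merge a second time on $E\setminus C$ (which may itself have become disconnected, hence co-connected) before the number of bad hyperedges actually drops. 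To complete your proof you would need to replace the single-vertex patch by some such merging-with-existing-hyperedges argument; as written, the key step fails already on three-vertex hyperedges covering one non-edge.
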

	
	\begin{proof}
	 	
	 	Let $H$ be an optimal connecting hypergraph on $G$ and let $E$ be an hyperedge of $H$ that is not co-connected. If $G[E]$ is complete, then $E$ does not connect any pair of non-adjacent vertices and $H\setminus E$ is still a connecting hypergraph whose cost is less or equal than the cost of $H$. Else, let $a$ and $b$ be two non-adjacent vertices of $E$. They therefore belong to the same co-connected component $C$ of $G[E]$.
	 	
	 	If $C$ is a co-connected component of $G$, then, since $C\subsetneq E\subsetneq V(G)$, we know that $G$ is not co-connected and by hypothesis, it therefore has a dominating vertex $x$ and $C=V(G)\setminus \{x\}$ and thus, $E=V(G)$. 
	 	Hence, $\cost(H)\geqslant |V(G)|-2$ which is absurd since $\{V(G)\setminus \{x\}\}$ is a connecting hypergraph of cost $|V(G)|-3$.
	 	
	 	Thus, $C$ is not a co-connected component of $G$, which means that there exists $u\in C$ and $v\in V(G)\setminus C$ such that $u$ and $v$ are not adjacent. 
	 	To facilitate the understanding, the construction we use in this case is illustrated in Figure \ref{fig:co-connected}.
	 	\begin{figure}[h!]
	 		\centering
	 		\includegraphics[width=0.8\columnwidth]{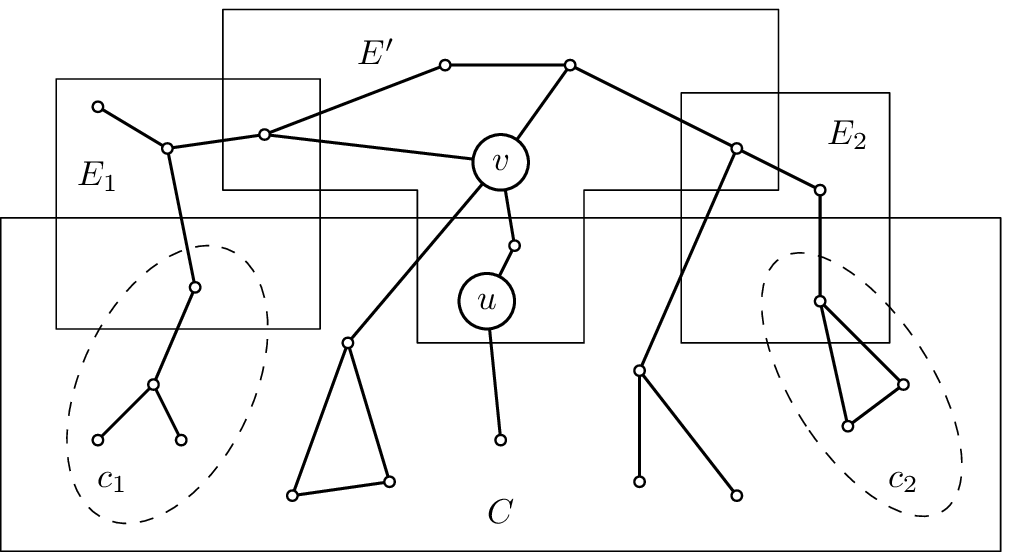}
	 		\caption{Here, $G[C]$ has five connected components, two of which ($c_1$ and $c_2$) are not connected to $E'$. To ensure that $\mathscr E$ is connected, we need the hyperedges $E_1$ and $E_2$.}
	 		\label{fig:co-connected}
	 	\end{figure}
	 	Since $C$ is a co-cc of $G[E]$, we also know that $v\notin E$. Hence, there exists $E'\neq E$ in $H$ that contains $u$ and $v$. 
	 	Let $c_1,\dots,c_l$ be the connected components of $G[C]$ that are not connected to any vertex of $E'$ (if any). 
	 	By definition of connecting hypergraph, we know that for all $i\leqslant l$, there exists $E_i\in H$ that connects a vertex of $E'$ and a vertex of $c_i$. We create $H'$ from $H$ by replacing $E$ by $E\setminus C$ and by replacing $E_1,\dots,E_l$ and $E'$ by $\mathscr E=E'\cup C \cup E_1 \cup \cdots \cup E_l$.
	 	
	 	We claim that $cost(H')\leqslant cost(H)-1$. Indeed, replacing $E$ by $E\setminus C$ decreases the cost by $|C|$ while replacing $E'$ by $E'\cup C$ increases the cost of at most $|C|-1$ because $E'\cap C$ contains at least the vertex $u$. Moreover, we can prove by induction on $i\leqslant l$ that the cost of $\{E'\cup C \cup E_1 \cup \cdots \cup E_{i-1},E_i\}$ is greater or equal than the cost of $\{E'\cup C \cup E_1 \cup \cdots \cup E_{i-1}\cup E_i\}$. Indeed, $(E'\cup C \cup E_1 \cup \cdots \cup E_{i-1})\cap E_i$ has size at least two (it contains at least one vertex in $E'$ and one in $c_i\subset C$, by definition of $E_i$). It follows that $\cost (\mathscr E)\leqslant \cost (\{E'\cup C, E_1, \dots, E_l\})$. Therefore, $\cost(H')\leqslant \cost(H)-1$. Since $H$ is an optimal connecting hypergraph, we know that $H'$ is not a connecting hypergraph. But observe that $H'$ satisfies the following properties:
	 	\begin{itemize}
	 		\item Every pair of non-adjacent vertices is still connected by an hyperedge of $H'$. Indeed, if two non-adjacent vertices are connected by $E$ in $H$ they are connected by $E\setminus C$ or $\mathscr E$ in $H'$ depending on whether they belonged to $C$ or not; if they are connected by an $E_i$ in $H$, they are connected by $\mathscr E$ in $H'$ and otherwise, the hyperedge that connects them in $H$ belongs to $H'$ too.
	 		\item The graph $G[\mathscr E]$ is connected. Indeed, the sets $E',E_1,\dots,E_l$ all induce connected subgraphs of $G$ by definition and are connected to each other because for all $i$, $E_i\cap E'\neq\varnothing$. Furthermore, all the connected components of $C$ are connected to a vertex of $E'$, except the $c_i$ which are by definition connected to the $E_i$.  
	 	\end{itemize}
	 	
	 	Thus, either $|E\setminus C|<2$ or $G[E\setminus C]$ is not connected.
	 	If $E\setminus C$ is a singleton, it does not connect any pair of non-adjacent vertices. Thus, $H'\setminus\{E\setminus C\}$ is a connecting hypergraph whose cost is strictly smaller than $H$, which is absurd. Hence, $G[E\setminus C]$ is not connected, which means it is co-connected. We can therefore apply to $E\setminus C$ the same method we used on $C$.
	 	
	 	Just like before, we know that there exists two non-adjacent vertices $u'\in E\setminus C$ and $v'\notin E$. Let $F\in H'$ be the hyperedge that connects $u'$ and $v'$, let $d_1,\dots,d_{l'}$ be the connected components of $E\setminus C$ that are not connected to $F$ and let $F_1,\cdots,F_{l'}$ be hyperedges of $H'$ such that $F_i$ connects a vertex of $F$ to a vertex of $d_i$. We create $H''$ from $H'$ by removing $E\setminus C$ and by replacing $F_1,\dots,F_{l'}$ and $F$ by $\mathscr F=F\cup (E\setminus C) \cup F_1 \cup \cdots \cup F_{l'}$.
	 	
	 	With the same arguments used for $H'$, we can prove that $\mathscr F$ is connected and that $H''$ connects every pair of non-adjacent vertices, which means that $H''$ is a connecting hypergraph. Moreover, with these arguments, we can also prove that $\cost(\mathscr F)\leqslant \cost(\{F\cup (E\setminus C) , F_1 , \dots , F_{l'}\})$. Furthermore, removing $E\setminus C$ from $H'$ decreases the cost by $|E\setminus C|-2$ and replacing $F$ by $F\cup (E\setminus C)$ increases it by at most $|E\setminus C|-1$ since $F\cap (E\setminus C)$ contains at least the vertex $u'$. Hence, $\cost(H'')\leqslant \cost(H')+1\leqslant \cost(H)$ and then $H''$ is an optimal covering hypergraph. Observe that $H''$ has strictly fewer hyperedges $E$ such that $G[E]$ is not co-connected than $H$. We prove the lemma by iterating this process.
	 \qed
	\end{proof}

%
	
	We now prove that \textbf{MCTS} has a polynomial $\frac 3 2$-approximation:

	\begin{theorem}
		\label{th:aprox}
		For every connected graph $G$ and optimal connecting transition set $T$ of $G$, the size of $T$ is at least $2/3\tau(G)$, where $\tau(G)$ is the function defined in Theorem \ref{thm:tau}.
	\end{theorem}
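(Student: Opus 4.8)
The plan is to pass to the hypergraph formulation of Theorem~\ref{thm:equipb}: if $H$ is an optimal connecting hypergraph of $G$, then $|T|=\cost(H)$, so it suffices to prove $\cost(H)\geq\frac23\tau(G)$. First I would normalize $H$. Any hyperedge $E$ that connects no pair of non-adjacent vertices may be deleted without increasing the cost and without destroying the connecting property, so I may assume every hyperedge of $H$ connects at least one non-adjacent pair. Then every hyperedge has at least $3$ vertices (if $E=\{a,b\}$ and $G[E]$ is connected then $ab\in E(G)$, so $E$ connects nothing), and a hyperedge $E$ with $|E|=3$ induces a $P_3$ (not a $K_3$, which would connect nothing) and therefore connects \emph{exactly} one non-adjacent pair, namely the two endpoints of that $P_3$.

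The core is a charging argument. Let $C_1,\dots,C_m$ be the co-connected components of $G$ of size at least $2$ (the singleton ones contribute $0$ to $\tau(G)$; if $m=0$ then $G$ is complete and $\tau(G)=\cost(H)=0$). For each $j$, the graph $\overline{G[C_j]}$ is a connected component of $\overline G$, hence connected, so it admits a spanning tree $S_j$ with $|C_j|-1$ edges, each of which is a non-adjacent pair of $G$; set $S=S_1\cup\dots\cup S_m$, a forest with $\sum_j(|C_j|-1)$ edges. I assign every edge of $S$ to some hyperedge of $H$ containing both of its endpoints — one exists because $H$ is connecting. Now count the edges charged to a fixed hyperedge $E$: they have both endpoints in $E$ and form a subgraph of the forest $S$, so there are at most $|E|-1$ of them; and if $|E|=3$ then by the previous paragraph $E$ can be charged at most the unique non-adjacent pair it connects, i.e.\ at most one edge. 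Writing $k_1$ (resp.\ $k_2$) for the number of hyperedges of $H$ of size $3$ (resp.\ $\geq 4$), summing over $H$ yields
\[
\sum_{j=1}^{m}(|C_j|-1)\ \leq\ k_1+\sum_{|E|\geq 4}(|E|-1).
\]

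Now I convert the right-hand side to cost. Because a size-$3$ hyperedge has cost $1$, $\cost(H)=k_1+\sum_{|E|\geq 4}(|E|-2)$, so $\sum_{|E|\geq 4}(|E|-1)=\cost(H)-k_1+k_2$ and the displayed bound becomes $\sum_j(|C_j|-1)\leq\cost(H)+k_2$. Each size-$\geq 4$ hyperedge has cost $\geq 2$, whence $2k_2\leq\sum_{|E|\geq 4}(|E|-2)\leq\cost(H)$, i.e.\ $k_2\leq\frac12\cost(H)$. Therefore $\sum_j(|C_j|-1)\leq\frac32\cost(H)$. Finally, each of the $m$ summands defining $\tau(G)$ in Theorem~\ref{thm:tau} is at most $|C_j|-1$, so $\tau(G)\leq\sum_j(|C_j|-1)\leq\frac32\cost(H)=\frac32|T|$, which is the desired inequality $|T|\geq\frac23\tau(G)$.

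I expect the crucial point — and the reason a naive argument falls short — to be the special treatment of size-$3$ hyperedges. Without it, one only gets $\sum_j(|C_j|-1)\leq\cost(H)+|H|$, i.e.\ every hyperedge absorbs a ``$+1$''; combined with the trivial $\cost(H)\geq|H|$ this gives only $\cost(H)\geq\frac12\sum_j(|C_j|-1)$, a ratio of $2$ rather than $\frac32$. Noticing that a triangle hyperedge connects a single non-adjacent pair (so contributes nothing to the slack $k_2$), while every hyperedge that does contribute to $k_2$ has cost at least $2$, is exactly what sharpens the bound. The routine parts are the clean-up reductions and checking that the edges charged to a hyperedge $E$ genuinely number at most $|E|-1$ (because $S$ is a forest).
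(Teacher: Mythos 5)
Your proof is correct, and it takes a genuinely different route from the paper's. The paper first proves (via Lemma \ref{lem:co-connected}) that an optimal connecting hypergraph of a connected, co-connected graph can be assumed to have only co-connected hyperedges, deduces the two inequalities $\cost(H)\geq 2k$ and $\cost(H)\geq n-k-1$ for such a hypergraph, and then handles a general graph by splitting each hyperedge $E_i$ into pieces $E_{i,j}$ attached to the co-connected components $C_j$ (adding a dominating vertex $v_j$ when $G[E_i\cap C_j]$ is disconnected) and invoking Lemma \ref{lem:ptsarticu} for the disconnected components. Your argument replaces all of this with a single global charging scheme: you charge the $\sum_j(|C_j|-1)$ edges of a spanning forest of $\bar G$ to hyperedges, bound the charge on each hyperedge by $|E|-1$ (and by $1$ when $|E|=3$, since a connected, non-complete $3$-set is a $P_3$ and connects only its endpoints), and close with $k_2\leq\frac12\cost(H)$. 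The two key quantitative ingredients are morally the same --- your special treatment of size-$3$ hyperedges plays the role of the paper's ``every co-connected hyperedge has $|E|\geq 4$, hence $\cost(H)\geq 2k$'', and your forest charge plays the role of the paper's inductive bound $\sum_j|E_j|-2\geq|\bigcup_j E_j|-i-1$ --- but you obtain them without Lemma \ref{lem:co-connected} and without the per-component decomposition, because a global spanning forest of $\bar G$ is insensitive to hyperedges that straddle several co-connected components. Your proof is shorter and self-contained modulo Theorem \ref{thm:equipb}; what it gives up is the structural normal form for optimal hypergraphs, which the paper reuses later (notably in the NP-hardness reduction), and it yields the bound $\cost(H)\geq\frac23\sum_j(|C_j|-1)$, which is at least as strong as $\frac23\tau(G)$.
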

	
	\begin{proof}
		By Theorem \ref{thm:equipb}, it is enough to prove that an optimal connecting hypergraph has cost at least $2/3\tau(G)$. 
		We start by proving the following claim which proves the theorem on the graphs that are connected and co-connected.
		\begin{claim}\label{claim:2/3tau}
			Let $G$ be a connected and co-connected graph with $n$ vertices. For every connecting hypergraph $H$ of $G$, we have  $\cost(H)\geq \frac{2(n-1)}{3}$.
		\end{claim}
		\begin{proof}
			We know by Lemma \ref{lem:co-connected} that there exists an optimal connecting hypergraph $H=\{E_1,\dots,E_k\}$ of $G$ such that for all $i$, $G[E_i]$ is co-connected. 
			
			First, observe that $\cost(H)\geq 2k$. 
			Indeed, for every $i$, $G[E_i]$ is both connected and co-connected, thus we have $|E_i|\geq 4$. As $\cost(H)=\sum_{i\leqslant k}|E_i| -2$, we deduce that $\cost(H)\geq 2k$.
			
			Now, we prove that $\cost(H)\geq n- k-1$.
			Observe that since $\bar G$ is connected, every vertex $v$ belongs to at least one edge in $\bar G$. Hence, by definition of connecting hypergraph, there exists $E\in H$ such that $v\in E$ and ${\bigcup}_{i\leqslant k} E_i=V(G)$.
			
			Also note that for all $i< k$, there exists an hyperedge $E_j$ with $j>i$ that shares a vertex with an hyperedge of $E_1,\dots,E_i$. Otherwise, ${\bigcup}_{j\leqslant i} E_j$ and $\bigcup_{j>i}E_j$ cover the vertices of $G$ and since $G$ is co-connected, this means that there exists $u\in{\bigcup}_{j\leqslant i} E_j$ and $v\in{\bigcup}_{j>i}E_j$ such that $(u,v)\notin E(G)$ but no set of $H$ connects them, which is impossible. We can assume without loss of generality that this hyperedge that shares at least one vertex with ${\bigcup}_{j\leqslant i} E_j$ is $E_{i+1}$.
			
			It is now immediate to prove by induction on $i\leqslant k$ that ${\sum}_{j\leqslant i}|E_j|-2\geqslant |{\bigcup}_{j\leqslant i}E_j| -i -1$. Thus, we have $\cost(H)\geq n - k- 1$.
			
			By combining the two inequalities, we find that $\cost(H)\geq \frac{2(n-1)}{3}$.
		\qed
	\end{proof}
		
		Let $H=\{E_1,\dots,E_k\}$ be an optimal connecting hypergraph of $G$, let $C_1,\dots, C_l$ be the co-connected components of $G$ and for all $j\leqslant l$, let $v_j$ be a vertex that does not belong to $C_j$.
		
		For all $i\leqslant k$ and $j\leqslant l$ such that $|E_i\cap C_j|\geqslant 2$, we define 
		\[E_{i,j}=\left\{\begin{split}
		E_i\cap C_j\quad\quad&\text{ if }G[E_i\cap C_j]\text{ is connected}\\
		E_i\cap C_j\cup\{v_j\}\quad&\text{ otherwise}
		\end{split}\right.\]
		and we define $F$ as the union of the $\{E_{i,j}\}$. Note that if $G$ is co-connected, there is only one co-connected component $C_1=V(G)$ and while there is no vertex $v_1\notin C_1$, for all $i$, $G[E_i\cap C_1]=G[E_i]$ is connected by definition, so we do not need $v_1$ in the above construction.
		
		Since $v_j$ dominates $C_j$, it is easy to check that every hyperedge of $F$ is connected and has size at least 2. Plus, any two non-adjacent vertices of $G$ belong to the same $C_j$ and are connected by an $E_i\in H$. Therefore, they belong to $E_i\cap C_j$ which has size at least two and thus belongs to $F$. Hence, $F$ is a connecting hypergraph.
		
		Let $E_i\in H$ and let $S_i$ be the set of values of $j$ such that $E_{i,j}$ exists. If there is only one such value $j$, then $\cost(E_i)\geqslant \cost(E_{i,j})=\cost(\bigcup_{j\in S_i}\{E_{i,j}\})$ follows immediately. Otherwise
		\[\cost(\bigcup_{j\in S_i} \{E_{i,j}\})\leqslant \sum_{j\in S_i}(|E_i \cap C_j|+|\{v_j\}|-2) \leqslant |E_i|-|S_i| \leqslant |E_i|-2=\cost(E_i)\]
		still holds. Since $F=\bigcup_{i\leqslant k}\bigcup_{j\in S_i} \{E_{i,j}\}$, it follows that $\cost(F)\leqslant \cost(H)$ which proves that $F$ is optimal.
		
		We know that two non-adjacent vertices necessarily belong to the same co-connected component of $G$ and since an hyperedge $E_{i,j}$ only contains one vertex that does not belong to $C_j$ it only connects non-adjacent vertices of one connected component. 
		For all $j$, let $F_j$ be the set of hyperedges of $F$ that connect non-adjacent vertices of $C_j$. 
		Since $E_{i,j}\subseteq C_j\cup \{v_j\}$, $F_j$ is a connecting hypergraph of $G[C_j\cup\{v_j\}]$. 
		
		Let $C_j$ be a co-connected component of $G$ and observe that:
		 \begin{itemize}
		 	\item if $C_j$ is not connected, then $v_j$ is a cut vertex of $C_j\cup\{v_j\}$. Hence, by Lemma \ref{lem:ptsarticu}, $\cost{(F_j)}\leqslant |C_j|-1$.
		 	
		 	\item  if $C_j$ is connected, since it is co-connected by definition, $C_j\cup \{v_j\}$ admits by Lemma \ref{lem:co-connected} an optimal connecting hypergraph $H_j$ such that for every hyperedge $E$ of $H_j$, $G[E]$ is co-connected and thus, $v_j\notin E$. 
		 	This proves that $H_j$ is a connecting hypergraph of $G[C_j]$. 
		 	As $G[C_j]$ is connected and co-connected, we know by the above claim that $\cost(H_j)\geq \frac{2(|C_j|-1)}{3}$. 
		 \end{itemize}
		 Thus, we have
		\[\begin{split}\cost(F)=\sum_{j\leqslant l}\cost(F_j)&\geqslant \sum_{j\leqslant l}\left\{\begin{split}
		|C_j|-1\quad&\text{ if }C_j\text{ is not connected}\\
		\frac {2(|C_j|-2)} 3&\text{ otherwise}
		\end{split}\right.\\
		&\geqslant \frac 2 3\sum_{j\leqslant l}\left\{\begin{split}
		|C_j|-1&\text{ if }C_j\text{ is not connected}\\
		|C_j|-2 &\text{ otherwise}
		\end{split}\right.\\
		&\geqslant \frac 2 3 \tau(G)
		\end{split}\]
	\qed
	\end{proof}

We also can prove that this bound is tight. Indeed consider the graph $G$ defined as the complement of a star of $n$ branches of 3 edges. The graph $G$ has $3n+1$ vertices that we call $c$, $v_{i,1}$, $v_{i,2}$ and $v_{i,3}$ with $1\leqslant i \leqslant n$ (in $\bar G$, $c$ is the center of the star and $v_{i,1}$, $v_{i,2}$ and $v_{i,3}$ are the three vertices of the branch $i$). Every vertex of $G$ is connected to every other except $c$ and $v_{i,1}$, $v_{i,1}$ and $v_{i,2}$ and $v_{i,2}$ and $v_{i,3}$ with $1\leqslant i \leqslant n$. Since $G$ is both connected and co-connected, our algorithm returns the connecting hypergraph $H_1=\{V(G)\}$ whose cost is $3n-1$ but the hypergraph $H_2=\cup_{1\leqslant i \leqslant n} \{c,v_{i,1}, v_{i,2}, v_{i,3}\}$ is a connecting hypergrpah of cost $2n$. The example of co-$P_7$ that we used to prove that the algorithm was not exact (see Figure \ref{spantree}) is the case $i=2$.

	\section{NP-hardness}\label{sec:NPC}
	
	In this section, we give a proof of NP-hardness of \textbf{OCHG} which involves very dense graphs. 
	Hence, we prefer to work with the complementary graphs and therefore prove the NP-hardness of the following problem that we call \textbf{co-OCHG}:
	
	\begin{definition}{co-Connecting Hypergraph}
		Let $G$ be a graph. 
		A co-connecting hypergraph is a collection of  hyperedges $E_1,\dots,E_r\subseteq V(G)$ such that
		\begin{itemize}
			\item For all $i\leqslant r$, $G[E_i]$ is co-connected and $|E_i|\geqslant 2$.
			\item For all $uv \in E(G)$, there exists $i$ such that $u,v \in E_i$ (we say that the hyperedge $E_i$ covers the edge $uv$).
		\end{itemize}
	\end{definition}

	\pbDef{co-Optimal Connecting HyperGraph (co-OCHG)}
	{A co-connected graph $G$.}
	{A co-Connecting Hypergraph that minimizes $\mathrm{cost}(H)=\sum_{E\in H}{(|E| - 2) }$.}	
	
	We prove the NP-hardness of this problem by reducing 3-SAT to it.
	We restrict ourselves to the version of 3-SAT where each variable has at least one positive and one negative occurrence and each clause has exactly 3 literals that are associated to different variables.   
	It is folklore that this restrictions of 3-SAT is NP-complete.
	
	Let $\mathscr F=\{c_1,\dots,c_m\}$ be an instance of 3-SAT with $n$ variables. 
	We will construct from $\mathscr F$ a graph $G_{\mathscr F}$ such that $\mathscr F$ is satisfiable if and only if $G_{\mathscr F}$ admits a co-covering hypergraph of cost $25m$.
	
	We start by describing how to construct $G_{\mathscr F}$. 
	To simplify the construction and the proofs, we give labels to some vertices and some edges. 
	The set of labels we use are $\{c_i,T_{i,x},F_{i,x} : i\leq m,\ x \text{ variable of } {\mathscr F} \}$.
	For each clause $c_i$ and each variable $x$ occurring in $c_i$, we create the gadget $g(x,c_i)$. 
	If $x$ occurs positively in $c_i$ then $g(x,c_i)$ is the graph depicted in Figure \ref{figxc}, otherwise, if $x$ occurs negatively in $c_i$ then $g(x,c_i)$ is the graph depicted in Figure \ref{fignxc}. 
	Each gadget $g(x,c_i)$ contains a vertex labelled $c_i$ and two edges labelled $T_{i,x}$ and $F_{i,x}$. 
	
\begin{figure}[h]
\begin{subfigure}{.49\textwidth}
\begin{center}
\begin{pspicture}(5,1.5)
\psline{-}(0.12,0.32)(4.6,0.32)
\psline{-}(1.4,0.32)(1.4,1.12)
\psline{-}(1.4,0.32)(2.36,1.12)
\psline{-}(3.32,0.32)(2.36,1.12)
\pscurve{-}(1.4,0.32)(2.36,0)(3.32,0.32)

\rput(-0.20,0.72){$F_{i,x}$}
\rput(4.9,0.72){$T_{i,x}$}
\rput(2.12,1.2){$c_i$}

\psline{-}(0.12,0.32)(0.12,1.12)
\psline{-}(4.6,0.32)(4.6,1.12)
\pscircle[linecolor=black, fillstyle=solid, fillcolor=white](4.6,1.12){0,12}
\pscircle[linecolor=black, fillstyle=solid, fillcolor=white](0.12,1.12){0,12}

\pscircle[linecolor=black, fillstyle=solid, fillcolor=white](0.12,0.32){0,12}
\pscircle[linecolor=black, fillstyle=solid, fillcolor=white](0.76,0.32){0,12}
\pscircle[linecolor=black, fillstyle=solid, fillcolor=white](1.4,0.32){0,12}
\pscircle[linecolor=black, fillstyle=solid, fillcolor=white](2.04,0.32){0,12}
\pscircle[linecolor=black, fillstyle=solid, fillcolor=white](2.68,0.32){0,12}
\pscircle[linecolor=black, fillstyle=solid, fillcolor=white](3.32,0.32){0,12}
\pscircle[linecolor=black, fillstyle=solid, fillcolor=white](3.96,0.32){0,12}
\pscircle[linecolor=black, fillstyle=solid, fillcolor=white](4.6,0.32){0,12}
\pscircle[linecolor=black, fillstyle=solid, fillcolor=white](1.4,1.12){0,12}
\pscircle[linecolor=black, fillstyle=solid, fillcolor=white](2.36,1.12){0,12}
\end{pspicture}
\caption{The gadget $g(x,c_i)$ if $x$ appears in $c_i$.}
\label{figxc}
\end{center}
\end{subfigure}
\begin{subfigure}{.49\textwidth}
\begin{center}
\begin{pspicture}(5,1.5)
\psline{-}(0.12,0.32)(4.6,0.32)
\psline{-}(3.32,0.32)(3.32,1.12)
\psline{-}(1.4,0.32)(2.36,1.12)
\psline{-}(3.32,0.32)(2.36,1.12)
\pscurve{-}(1.4,0.32)(2.36,0)(3.32,0.32)

\rput(-0.20,0.72){$F_{i,x}$}
\rput(4.9,0.72){$T_{i,x}$}
\rput(2.12,1.2){$c_i$}

\psline{-}(0.12,0.32)(0.12,1.12)
\psline{-}(4.6,0.32)(4.6,1.12)
\pscircle[linecolor=black, fillstyle=solid, fillcolor=white](4.6,1.12){0,12}
\pscircle[linecolor=black, fillstyle=solid, fillcolor=white](0.12,1.12){0,12}

\pscircle[linecolor=black, fillstyle=solid, fillcolor=white](0.12,0.32){0,12}
\pscircle[linecolor=black, fillstyle=solid, fillcolor=white](0.76,0.32){0,12}
\pscircle[linecolor=black, fillstyle=solid, fillcolor=white](1.4,0.32){0,12}
\pscircle[linecolor=black, fillstyle=solid, fillcolor=white](2.04,0.32){0,12}
\pscircle[linecolor=black, fillstyle=solid, fillcolor=white](2.68,0.32){0,12}
\pscircle[linecolor=black, fillstyle=solid, fillcolor=white](3.32,0.32){0,12}
\pscircle[linecolor=black, fillstyle=solid, fillcolor=white](3.96,0.32){0,12}
\pscircle[linecolor=black, fillstyle=solid, fillcolor=white](4.6,0.32){0,12}
\pscircle[linecolor=black, fillstyle=solid, fillcolor=white](3.32,1.12){0,12}
\pscircle[linecolor=black, fillstyle=solid, fillcolor=white](2.36,1.12){0,12}
\end{pspicture}
\caption{The gadget $g(x,c_i)$ if $\bar x$ appears in $c_i$.}
\label{fignxc}
\end{center}
\end{subfigure}
\caption{}
\end{figure}

We then create a new vertex for each clause $c_i$ that we connect to the three vertices labelled $c_i$ and to an additional vertex of degree 1. 
	We thus have for each clause a graph like the one depicted in Figure \ref{figclause} that we call $g(c_i)$.
	
\begin{figure}[h]
	
\begin{center}
\begin{pspicture}(5.84,6.5)
\psline{-}(2.92,2.52)(2.92,1.12)
\psline{-}(2.92,2.52)(1.536,3.52)
\psline{-}(2.92,2.52)(4.304,3.52)
\psline{-}(2.92,2.52)(4.304,1.92)

\psline{-}(0.12,0.32)(5.72,0.32)

\psline{-}(0.12,0.32)(0.12,1.12)
\psline{-}(5.72,0.32)(5.72,1.12)

\psline{-}(1.72,0.32)(1.72,1.12)
\psline{-}(1.72,0.32)(2.92,1.12)
\psline{-}(4.12,0.32)(2.92,1.12)
\pscurve{-}(1.72,0.32)(2.92,0)(4.12,0.32)


\pscircle[linecolor=black, fillstyle=solid, fillcolor=white](0.12,0.32){0,12}
\pscircle[linecolor=black, fillstyle=solid, fillcolor=white](0.92,0.32){0,12}
\pscircle[linecolor=black, fillstyle=solid, fillcolor=white](1.72,0.32){0,12}
\pscircle[linecolor=black, fillstyle=solid, fillcolor=white](2.52,0.32){0,12}
\pscircle[linecolor=black, fillstyle=solid, fillcolor=white](3.32,0.32){0,12}
\pscircle[linecolor=black, fillstyle=solid, fillcolor=white](4.12,0.32){0,12}
\pscircle[linecolor=black, fillstyle=solid, fillcolor=white](4.92,0.32){0,12}
\pscircle[linecolor=black, fillstyle=solid, fillcolor=white](5.72,0.32){0,12}

\pscircle[linecolor=black, fillstyle=solid, fillcolor=white](0.12,1.12){0,12}
\pscircle[linecolor=black, fillstyle=solid, fillcolor=white](5.72,1.12){0,12}
\pscircle[linecolor=black, fillstyle=solid, fillcolor=white](1.72,1.12){0,12}
\pscircle[linecolor=black, fillstyle=solid, fillcolor=white](2.92,1.12){0,12}

\psline{-}(-0.56,1.496)(2.24,6.344)
\psline{-}(0.24,2.88)(1.536,3.52)
\psline{-}(0.24,2.88)(0.936,2.48)
\psline{-}(1.44,4.96)(1.536,3.52)
\pscurve{-}(0.24,2.88)(0.568,4.08)(1.44,4.96)

\psline{-}(-0.56,1.496)(-1.256,1.896)
\psline{-}(2.24,6.344)(1.544,6.744)

\pscircle[linecolor=black, fillstyle=solid, fillcolor=white](-0.56,1.496){0,12}
\pscircle[linecolor=black, fillstyle=solid, fillcolor=white](-0.16,2.192){0,12}
\pscircle[linecolor=black, fillstyle=solid, fillcolor=white](0.24,2.88){0,12}
\pscircle[linecolor=black, fillstyle=solid, fillcolor=white](0.64,3.576){0,12}
\pscircle[linecolor=black, fillstyle=solid, fillcolor=white](1.04,4.264){0,12}
\pscircle[linecolor=black, fillstyle=solid, fillcolor=white](1.44,4.96){0,12}
\pscircle[linecolor=black, fillstyle=solid, fillcolor=white](1.84,5.656){0,12}
\pscircle[linecolor=black, fillstyle=solid, fillcolor=white](2.24,6.344){0,12}

\pscircle[linecolor=black, fillstyle=solid, fillcolor=white](1.536,3.52){0,12}
\pscircle[linecolor=black, fillstyle=solid, fillcolor=white](0.936,2.48){0,12}

\pscircle[linecolor=black, fillstyle=solid, fillcolor=white](-1.256,1.896){0,12}
\pscircle[linecolor=black, fillstyle=solid, fillcolor=white](1.544,6.744){0,12}

\psline{-}(6.4,1.496)(3.6,6.344)
\psline{-}(5.6,2.88)(4.304,3.52)
\psline{-}(4.4,4.96)(3.704,4.56)
\psline{-}(4.4,4.96)(4.304,3.52)
\pscurve{-}(5.6,2.88)(5.272,4.08)(4.4,4.96)

\psline{-}(6.4,1.496)(7.096,1.896)
\psline{-}(3.6,6.344)(4.296,6.744)

\pscircle[linecolor=black, fillstyle=solid, fillcolor=white](6.4,1.496){0,12}
\pscircle[linecolor=black, fillstyle=solid, fillcolor=white](6,2.192){0,12}
\pscircle[linecolor=black, fillstyle=solid, fillcolor=white](5.6,2.88){0,12}
\pscircle[linecolor=black, fillstyle=solid, fillcolor=white](5.2,3.576){0,12}
\pscircle[linecolor=black, fillstyle=solid, fillcolor=white](4.8,4.264){0,12}
\pscircle[linecolor=black, fillstyle=solid, fillcolor=white](4.4,4.96){0,12}
\pscircle[linecolor=black, fillstyle=solid, fillcolor=white](4,5.656){0,12}
\pscircle[linecolor=black, fillstyle=solid, fillcolor=white](3.6,6.344){0,12}

\pscircle[linecolor=black, fillstyle=solid, fillcolor=white](4.304,3.52){0,12}
\pscircle[linecolor=black, fillstyle=solid, fillcolor=white](3.704,4.56){0,12}

\pscircle[linecolor=black, fillstyle=solid, fillcolor=white](2.92,2.52){0,12}
\pscircle[linecolor=black, fillstyle=solid, fillcolor=white](4.304,1.92){0,12}

\pscircle[linecolor=black, fillstyle=solid, fillcolor=white](7.096,1.896){0,12}
\pscircle[linecolor=black, fillstyle=solid, fillcolor=white](4.296,6.744){0,12}

\rput(-0.18,0.76){$F_{i,x}$}
\rput(6.02,0.76){$T_{i,x}$}
\rput(2.68,1.2){$c_i$}

\rput(-1.1,1.426){$T_{i,z}$}
\rput(2.02,6.724){$F_{i,z}$}
\rput(1.746,3.68){$c_i$}

\rput(6.94,1.426){$F_{i,y}$}
\rput(3.77,6.729){$T_{i,y}$}
\rput(4.094,3.68){$c_i$}

\rput(4.52,0.8){\large{$g(x,c_i)$}}
\rput(5.52,5.48){\large{$g(y,c_i)$}}
\rput(0.4,5.48){\large{$g(z,c_i)$}}

\end{pspicture}
\caption{The clause-gadget associated to the clause $c_i=(x\lor \lnot y \lor \lnot z)$.}
\label{figclause}
\end{center}

\end{figure}

	Finally, for each variable $x$, we do the following.
	Let $c_{i_1},\dots,c_{i_\ell}$ be the clause where $x$ appears.
	Observe that $\ell\geqslant 2$ since every variable has a positive and a negative occurrence. 
	For each $j\leq \ell$, we merge the edge labelled $T_{i_j,x}$ in $g(x,c_{i_j})$ with the edge labelled $F_{i_{k},x}$ in $g(x,c_{i_{k}})$ (where $k=j+1\mod \ell$) such that the resulting edge has an extremity of degree one. 
	We consider that this edge has both $T_{i_j,x}$ and $F_{i_{k},x}$ as labels.
	For example, if a variable $x$ appears positively in the clauses $c_1$ and $c_4$ and negatively in the clause $c_3$, the Figure \ref{figvar} depicts what the graph looks like around the gadget associated to the variable $x$.

\begin{figure}[h]
\begin{center}
\begin{pspicture}(6.4,8)
\psline{-}(6.4,4)(7.2,4)
\psline{-}(5.84,5.8)(6.504,6.256)
\psline{-}(5.84,5.8)(5.2,7.464)
\psline{-}(3.44,7.192)(5.2,7.464)
\psline{-}(1.6,6.768)(1.2,7.464)
\psline{-}(0.32,5.392)(-0.8,4)
\psline{-}(0.32,2.608)(-0.8,4)
\psline{-}(0.32,2.608)(-0.4,2.264)
\psline{-}(1.6,1.232)(1.2,0.536)
\psline{-}(3.44,0.808)(3.496,0.008)
\psline{-}(3.44,0.808)(5.2,0.536)
\psline{-}(5.84,2.2)(5.2,0.536)

\psline{-}(5.84,5.8)(3.44,7.192)
\psline{-}(0.32,5.392)(0.32,2.608)
\psline{-}(3.44,0.808)(5.84,2.2)

\psline[linestyle=dashed]{-}(5.6,8.152)(5.2,7.464)
\psline[linestyle=dashed]{-}(-0.8,4)(-1.6,4)
\psline[linestyle=dashed]{-}(5.6,-0.152)(5.2,0.536)

\pscircle[linecolor=black](3.2,4){3.2}

\pscircle[linecolor=black, fillstyle=solid, fillcolor=white](6.4,4){0,12}
\pscircle[linecolor=black, fillstyle=solid, fillcolor=white](6.256,4.944){0,12}
\pscircle[linecolor=black, fillstyle=solid, fillcolor=white](5.84,5.8){0,12}
\pscircle[linecolor=black, fillstyle=solid, fillcolor=white](5.192,6.504){0,12}
\pscircle[linecolor=black, fillstyle=solid, fillcolor=white](4.368,6.976){0,12}
\pscircle[linecolor=black, fillstyle=solid, fillcolor=white](3.44,7.192){0,12}
\pscircle[linecolor=black, fillstyle=solid, fillcolor=white](2.488,7.12){0,12}
\pscircle[linecolor=black, fillstyle=solid, fillcolor=white](1.6,6.768){0,12}
\pscircle[linecolor=black, fillstyle=solid, fillcolor=white](0.856,6.176){0,12}
\pscircle[linecolor=black, fillstyle=solid, fillcolor=white](0.32,5.392){0,12}
\pscircle[linecolor=black, fillstyle=solid, fillcolor=white](0.032,4.48){0,12}
\pscircle[linecolor=black, fillstyle=solid, fillcolor=white](0.032,3.52){0,12}
\pscircle[linecolor=black, fillstyle=solid, fillcolor=white](0.32,2.608){0,12}
\pscircle[linecolor=black, fillstyle=solid, fillcolor=white](0.856,1.824){0,12}
\pscircle[linecolor=black, fillstyle=solid, fillcolor=white](1.6,1.232){0,12}
\pscircle[linecolor=black, fillstyle=solid, fillcolor=white](2.488,0.88){0,12}
\pscircle[linecolor=black, fillstyle=solid, fillcolor=white](3.44,0.808){0,12}
\pscircle[linecolor=black, fillstyle=solid, fillcolor=white](4.368,1.024){0,12}
\pscircle[linecolor=black, fillstyle=solid, fillcolor=white](5.192,1.496){0,12}
\pscircle[linecolor=black, fillstyle=solid, fillcolor=white](5.84,2.2){0,12}
\pscircle[linecolor=black, fillstyle=solid, fillcolor=white](6.256,3.056){0,12}

\pscircle[linecolor=black, fillstyle=solid, fillcolor=white](7.2,4){0,12}
\pscircle[linecolor=black, fillstyle=solid, fillcolor=white](5.2,7.464){0.12}
\pscircle[linecolor=black, fillstyle=solid, fillcolor=white](6.504,6.256){0,12}
\pscircle[linecolor=black, fillstyle=solid, fillcolor=white](1.2,7.464){0,12}
\pscircle[linecolor=black, fillstyle=solid, fillcolor=white](-0.8,4){0,12}
\pscircle[linecolor=black, fillstyle=solid, fillcolor=white](-0.4,2.264){0,12}
\pscircle[linecolor=black, fillstyle=solid, fillcolor=white](1.2,0.536){0,12}
\pscircle[linecolor=black, fillstyle=solid, fillcolor=white](3.496,0.008){0,12}
\pscircle[linecolor=black, fillstyle=solid, fillcolor=white](5.2,0.536){0,12}

\rput(6.8,4.4){$F_{1,x}$}
\rput(6.8,3.6){$T_{4,x}$}
\rput(1.912,7.32){$T_{1,x}$}
\rput(1.136,6.92){$F_{3,x}$}
\rput(1.912,0.68){$F_{4,x}$}
\rput(1.136,1.08){$T_{3,x}$}

\rput(5.6,7.464){$c_1$}
\rput(-1.04,4.24){$c_3$}
\rput(5.6,0.536){$c_4$}

\end{pspicture}
\caption{The gadgets associated to the variable $x$.}
\label{figvar}
\end{center}
\end{figure}

	By connecting all the gadgets $g(x,c_i)$ as described above, we obtain the gadget graph $G_{\mathscr F}$. 
	We may assume that $G_{\mathscr F}$ is connected. Otherwise, this means that ${\mathscr F}$ is the conjunction of two formulas that share no common variables and ${\mathscr F}$ is satisfiable if and only if those two formulas are.
	Observe that $G_{\mathscr F}$ is trivially co-connected. 
	Moreover, the size of $G_{\mathscr F}$ is polynomial in $n$ and $m$. 
	
	Now, we prove that $\mathscr F$ is satisfiable if and only if $G_{\mathscr F}$ admits a co-covering hypergraph of cost $25m$.
	We start with the following lemma which proves the existence of an optimal co-covering hypergraph where every hyperedge is contained in the vertex set of some clause-gadget.
	
	\begin{lemma} \label{lem:specialsolution}
		There exists an optimal co-connecting hypergraph $H$ of $G_{\mathscr F}$ such that $H= H_1\cup H_2 \cup \dots \cup H_m$  and for all $i\leq m$, we have $V(H_i)\subseteq V(g(c_i))$.
	\end{lemma}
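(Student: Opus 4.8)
The plan is to start from an arbitrary optimal co-connecting hypergraph and to localize its hyperedges one by one inside single clause-gadgets. Two structural features of $G_{\mathscr F}$ drive the argument. First, two distinct clause-gadgets $g(c_i)$ and $g(c_j)$ can share a vertex only along a merged edge: $V(g(c_i))\cap V(g(c_j))$ is the union of the endpoint pairs of the (at most two — two occurring when a variable appears exactly in $c_i$ and $c_j$) edges that were glued together when forming the variable-gadgets, and each such merged edge is pendant, i.e.\ has an extremity of degree $1$ in $G_{\mathscr F}$. Second, every edge of $G_{\mathscr F}$ has both endpoints inside $V(g(c_i))$ for some $i$: the edges drawn inside a gadget $g(x,c_i)$, the edges added when assembling $g(c_i)$ from its three sub-gadgets, and the merged edges each lie inside a single clause-gadget (a merged edge inside the two clause-gadgets it joins). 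Finally, I will use that $G_{\mathscr F}$ has bounded degree, so that a vertex set whose induced subgraph contains an edge but is not co-connected has bounded size (it splits into two nonempty parts completely joined in $G_{\mathscr F}$, which forces both parts to be small).

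Then I would fix an optimal co-connecting hypergraph $H$ and, among all optimal ones, choose $H$ so that $\sum_{E\in H}|E|$ is minimum. By optimality every hyperedge has at least $3$ vertices and covers at least one edge of $G_{\mathscr F}$ (an edgeless hyperedge could be deleted), and minimality of $\sum_{E\in H}|E|$ forbids removable vertices (vertices whose deletion from their hyperedge leaves a co-connecting hypergraph). Suppose now some $E\in H$ is not contained in any $V(g(c_i))$. I would replace $E$ by the family $\{E_i\}$ with $E_i=E\cap V(g(c_i))$, keeping only the $E_i$ that contain an edge of $G_{\mathscr F}$, and replacing a kept $E_i$ whose induced subgraph is not co-connected by a small co-connected hyperedge inside $g(c_i)$ covering the same (boundedly many) edges. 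Feasibility is clear: every edge covered by $E$ lies in some $V(g(c_i))$, hence inside the corresponding $E_i$, hence remains covered. Since all the new hyperedges lie inside single clause-gadgets, this replacement strictly decreases the number of hyperedges of $H$ outside every $V(g(c_i))$, so iterating it finitely often and grouping the hyperedges by clause yields $H=H_1\cup\dots\cup H_m$ with $V(H_i)\subseteq V(g(c_i))$, provided each step keeps $H$ optimal.

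Hence the heart of the proof is showing that the replacement does not increase the cost. The only vertices counted in two different $E_i$'s are those lying on a merged edge joining the two corresponding clause-gadgets; when the clause-gadgets met by $E$ overlap in a forest-like pattern, $\sum_i(|E_i|-2)\leq|E|-2$ follows by telescoping along that forest, and the pendant extremity of the merged edges is what keeps the telescoping sharp. The case that needs care is the cyclic one — which can only come from a variable occurring in exactly two of the clauses that $E$ meets, producing a double merged edge between those two gadgets — together with the degenerate intersections $E_i$ that are not co-connected (in particular small dense ones such as triangles). I expect this is where most of the work lies: one must either absorb these into the budget using the slack created by the double-counted merged-edge vertices and by replacing "minority" pieces with size-$3$ hyperedges, or, more cleanly, argue that such configurations cannot occur in an optimal hypergraph minimizing $\sum_{E\in H}|E|$. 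Everything else — the two structural facts about the gadget graph, and the iteration — is routine.
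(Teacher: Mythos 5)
Your overall plan (localize the hyperedges into clause-gadgets by an exchange argument that does not increase the cost) is the right one, and your structural observations about $G_{\mathscr F}$ are correct. But there is a genuine gap: the one step you yourself flag as ``where most of the work lies'' --- showing that replacing a straddling hyperedge $E$ by the pieces $E_i=E\cap V(g(c_i))$, suitably repaired, does not increase the cost --- is precisely the step that can fail as stated, and you give no argument for it. Two concrete problems. First, a piece $E_i$ need not induce a co-connected subgraph, and the repair is not free: each gadget $g(x,c_i)$ contains a triangle, and a piece inducing a triangle contributes $|E_i|-2=1$ to your accounting, yet any family of co-connected hyperedges covering its three edges has total cost at least $2$ (a co-connected hyperedge on $3$ vertices covers at most one edge, and one on $4$ vertices covering a triangle must add a fourth vertex non-adjacent to all three). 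So ``replacing it by a small co-connected hyperedge covering the same edges'' can strictly increase the cost, and nothing in the proposal shows that the slack from double-counted junction vertices absorbs this. Second, your telescoping bound $\sum_i(|E_i|-2)\leq|E|-2$ requires the overcount of shared vertices to be at most $2(k-1)$ when $E$ meets $k$ clause-gadgets, which fails exactly in the cyclic and doubly-merged configurations you identify; you name two possible ways out but establish neither. A plan whose central inequality is left open in the very cases where it is in doubt is not a proof.

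For comparison, the paper avoids this global accounting by first normalizing the solution: it applies Lemma \ref{lem:co-connected} (in the complement) to obtain an optimal co-connecting hypergraph in which every hyperedge additionally induces a \emph{connected} subgraph of $G_{\mathscr F}$. Connectivity forces any hyperedge covering non-labelled edges of two clause-gadgets to contain the three junction vertices $v_1,v_2,v_4$ of Figure \ref{figjun}, and then three explicit local exchanges (each checked to preserve connectivity, co-connectivity, coverage and cost) strictly decrease the number of junctions covered. Some such normalization, pinning down the local structure of a hyperedge at a junction before any splitting is attempted, seems necessary for your split-and-repair argument to go through.
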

	
	\begin{proof}
		
		In the graph $G_{\mathscr F}$, the intersection between two clause-gadgets only contains labelled edges. 
		Thus, if an hyperedge $E$ is not included in any clause-gadget, it means that $E$ covers at least two non-labelled edges from two distinct clause-gadgets.
		The Figure \ref{figjun} depicts what we call the junction between two clause-gadgets and names the vertices of interest in this proof. 
		Here, the clause-gadget $g(c_1)$ contains the vertices $v_0,v_1,v_2$ and $v_3$ and $g(c_2)$ contains $v_2,v_3,v_4$ and $v_5$. 
		
		\begin{figure}[h!]
				\begin{center}
					\begin{pspicture}(4.5,1.85)
					\psline{-}(0.15,0.4)(1.15,0.4)
					\psline{-}(1.15,0.4)(2.15,0.4)
					\psline{-}(2.15,0.4)(3.15,0.4)
					\psline{-}(3.15,0.4)(4.15,0.4)
					\psline{-}(2.15,0.4)(2.15,1.4)
					\rput(2.55,0.85){$F_{1,x}$}
					\rput(1.8,0.85){$T_{2,x}$}
					\pscircle[linecolor=black, fillstyle=solid, fillcolor=white](0.15,0.4){0,24}
					\pscircle[linecolor=black, fillstyle=solid, fillcolor=white](1.15,0.4){0,24}
					\pscircle[linecolor=black, fillstyle=solid, fillcolor=white](2.15,0.4){0,24}
					\pscircle[linecolor=black, fillstyle=solid, fillcolor=white](3.15,0.4){0,24}
					\pscircle[linecolor=black, fillstyle=solid, fillcolor=white](4.15,0.4){0,24}
					
					\pscircle[linecolor=black, fillstyle=solid, fillcolor=white](2.15,1.4){0,24}
					
					\rput(0.15,0.4){$v_0$}
					\rput(1.15,0.4){$v_1$}
					\rput(2.15,0.4){$v_2$}
					\rput(3.15,0.4){$v_4$}
					\rput(4.15,0.4){$v_5$}
					\rput(2.15,1.4){$v_3$}
					
					\rput(0.65,1.2){$c_1$}
					\rput(3.65,1.2){$c_2$}
	
					\end{pspicture}
					\caption{The junction between the clause-gadgets of $c_1$ and $c_2$. }
					\label{figjun}
				\end{center}
		\end{figure}

		Since $G_{\mathscr F}$ is connected and co-connected, we know by Lemma \ref{lem:co-connected} that it admits an optimal co-connecting hypergraph $H=\{E_1,\ldots,E_k\}$ such that for all $i$, $G_{\mathscr F}[E_i]$ is connected.
		The only way for an hyperedge $E_i$, such that $G_{\mathscr F}[E_i]$ is connected, to cover non-labelled edges in several clause-gadgets is to contain the vertices labelled $v_1,v_2$ and $v_4$ at the junction between two clause-gadgets. 
		In this case, we say that $E_i$ covers the junction between these two gadgets.
		Let us assume that an hyperedge $E_i$ covers the junction between two clause-gadgets $g(c_1)$ and $g(c_2)$. We make no assumption on whether $x$ appears positively or negatively in $c_1$ and $c_2$. 
		
		By definition, $E_i$ contains the vertices $v_1,v_2$ and $v_4$. 
		If $E_i$ does not contain $v_3$, there exists an hyperedge $E_j$ that covers the edge $\{v_2,v_3\}$ has to contain at least one of $v_1$ and $v_4$ (in order to induce a connected and co-connected subgraph of $G_{\mathscr F}$) and therefore shares at least two common vertices with $E_i$. 
		This means that we can merge $E_i$ and $E_j$ without increasing the cost of the solution and thus, we can assume that $E_i$ contains $v_3$. 
		However, $G_{\mathscr F}[\{v_1,v_2,v_3,v_4\}]$ is still not co-connected and $E_i$ has to contain other vertices. 
		By connectivity, $E_i$ must contain at least one of $v_0$ and $v_5$.
		\begin{itemize}
			\itemb If $E_i$ contains only one of $\{v_0,v_5\}$, say $v_0$, we remove $v_4$ from $E_i$ and add $v_2$ to the hyperedge $E_j$ that covers $\{v_4,v_5\}$. The hyperedges $E_i$ and $E_j$ still induce co-connected subgraphs of $G_{\mathscr F}$, cover the same edges as before and the cost of $H$ does not increase.
			\itemb If $G_{\mathscr F}[E_i\setminus\{v_2,v_3\}]$ is connected, we replace $E_i$ in the solution by the hyperedges $F_1=\{v_0,v_1,v_2,v_3\}$ and $F_2=E_i\setminus\{v_1,v_3\}$. The hyperedges $F_1$ and $F_2$ have the same cost as $E_i$ and cover the same edges, $G_{\mathscr F}[F_1]$ is co-connected and so is $G_{\mathscr F}[F_2]$ (all the vertices are adjacent to $v_2$ in $\bar{G_{\mathscr F}[F_2]}$ except $v_4$ that can be connected to $v_2$ through $v_0$)	. Let us also note that both $G_{\mathscr F}[F_1]$ and $G_{\mathscr F}[F_2]$ are connected.
			\itemb If $E_i$ contains both $v_0$ and $v_5$ but $G_{\mathscr F}[E_i\setminus\{v_2,v_3\}]$ is not connected, we know it has two connected components $C_1$ and $C_2$ (since removing a vertex set of degree $k$ cannot create more than $k$ connected components). We replace $E_i$ by $F_1=\{v_2,v_3\}\cup C_1$ and $F_2=\{v_2,v_3\}\cup C_2$. The hyperedges $F_1$ and $F_2$ have the same cost as $E_i$ and cover the same edges. Furthermore, both $\bar{G_{\mathscr F}[F_1]}$ and $\bar{G_{\mathscr F}[F_2]}$ are connected since every vertex is adjacent to $v_3$ except $v_2$ that can be connected to $v_3$ through $v_0$ and $v_5$ in $C_1$ and $C_2$. We also notice that $G_{\mathscr F}[F_1]$ and $G_{\mathscr F}[F_2]$ are both connected.
		\end{itemize}
		In any case, we can build an optimal co-connecting hypergraph where $G_{\mathscr F}[E_i]$ is still connected for all $i$ and the hyperedges cover strictly fewer junctions. We can iterate this process until $H$ satisfies the lemma.
	\qed
	\end{proof}

	Let $H=H_1\cup \dots \cup H_m$ be an optimal co-connecting hypergraph of $G$ such that for all $i\leq m$, we have $V(H_i)\subseteq V(g(c_i))$.
	
	Observe that the labelled edges are the only edges of $G_{\mathscr F}$ to belong to several clause-gadgets.
	Thus, for each $i\leq m$, the non-labelled edges of $g(c_i)$ must be covered by $H_i$.
	Consequently, the cost of $H_i$ is fully determined by which labelled edges of $g(c_i)$ it covers.
	We want to prove that $G_{\mathscr F}$ is satisfiable if and only if the labelled edges can be covered in a way such that each $H_j$ has cost $25$.
	
	Let $c_i$ be a clause of $\mathscr F$ and let us study the cost of $H_i$ in function of which labelled edges it covers. 
	Let $x$ be a variable of $c_i$. 
	We recall that the gadget $g(x,c_i)$ differs depending on whether $x$ appears positively or negatively in $c_i$ but in both cases, the gadget has an edge labelled $F_{i,x}$ and one labelled $T_{i,x}$.
The subgraph of $g(x,c_i)$ induced by $H_i$ can take four values (up to isomorphims) depending on which of the following situations occurs:
	\begin{itemize}
		\item $H_i$ covers neither $T_{i,x}$ nor $F_{i,x}$. We call this configuration $N$ (for “none”).
		\item $H_i$ covers both $T_{i,x}$ and $F_{i,x}$. We call this configuration $B$ (for “both”).
		\item $H_i$ covers $T_{i,x}$ and $x$ appears positively in $c_i$ or $H_i$ covers $F_{i,x}$ and $x$ appears negatively in $c_i$.
		We call this configuration $S$ (for “satisfied”).
		\item $H_i$ covers $T_{i,x}$ and $x$ appears negatively in $c_i$ or $H_i$ covers $F_{i,x}$ and $x$ appears positively in $c_i$.
		We call this configuration $U$ (for “unsatisfied”).
	\end{itemize}
	Hence, the edges that $H_i$ covers are determined (up to isomorphism) by the configurations encountered for each of the three variables that appear in $c_i$. 
	Since the clause-gadget is symmetric, the order does not matter: the configuration $SUN$ is exactly the same as the configuration $NSU$. Thus, we find that $H_i$ can cover 20 different sets of edges up to isomorphisms. 
	We determined the optimal values of $\cost(H_i)$ for each case via a computer-assisted exhaustive search. The results are the following:
	
	\begin{center}
		\begin{tabular}{|c|c| |c|c| |c|c| |c|c|}
			\hline
			Configuration & Minimal cost & conf.& min. &conf. & min. &conf. & min.\\
			\hline
			$BBB$ & 28 & $BUS$ & 26 & $UUU$ & 26 & $UNN$ & 25 \\  
			\hline
			$BBU$ & 27 & $BUN$ & 26 & $UUS$ & 25 & $SSS$ & 25 \\
			\hline
			$BBS$ & 27 & $BSS$ & 26 & $UUN$ & 25 & $SSN$ & 25 \\
			\hline
			$BBN$ & 27 & $BSN$ & 26 & $USS$ & 25 & $SNN$ & 25 \\
			\hline
			$BUU$ & 26 & $BNN$ & 26 & $USN$ & 25 & $NNN$ & 25 \\
			\hline
		\end{tabular}
	\end{center}
	
	The first observation we make is that the optimal value of $\cost(H_i)$ is necessarily at least 25 and an optimal co-connecting hypergraph on $G_{\mathscr F}$ therefore always costs at least $25m$. We now investigate the case where the optimal cost is exactly $25m$.
	To this end, we suppose that $H$ has a cost of $25m$.
	
	We note that every configuration that contains a $B$ costs at least 26. 
	Thus, we know that for each $H_i$ and each $x$ appearing in $c_i$, $H_i$ covers at most one of the two labelled edges of $g(x,c_i)$.
	
	Let us now look at the gadgets associated to a variable $x$ that appears in $\ell$ clauses (cf. Figure \ref{figvar}). 
	For all $j$ such that $x$ appears in $c_j$, the hypergraph $H_j$ either covers the two labelled edges of $g(x,c_j)$ ($B$), one ($S$ or $U$) or none ($N$). 
	Since every edge must be covered at least once, this means that a solution where no configuration involves $B$ also does not feature a configuration involving $N$. 
	Hence, for every $H_i$, the only configurations that occur are $S$ and $U$. 

	Let us suppose that $H_i$ covers the edge $T_{i,x}$. 
	Since the configuration $B$ is impossible, we know that $H_i$ does not cover the edge $F_{i,x}$. 
	Let $T_{j,x}$ the other label of the edge $F_{i,x}$. 
	Since this edge has to be covered, this means that $H_j$ must cover the edge $T_{j,x}$ and because the configuration $B$ is impossible, it cannot cover the edge $F_{j,x}$.
	 We can prove by induction that for each variable $x$ either, for all gadget $g(x,c_i)$, $H_i$ covers the edge $T_{i,x}$ or for all gadget $g(x,c_i)$, $H_i$ covers the edge $F_{i,x}$.
	 In the first case, we say that the variable $x$ is set to $\True$, and in the second case, to $\False$. 
	 If the variable $x$ is set to $\True$, this means all its positive occurrence will lead to a $S$ configuration in the clause where it appears and conversely.
	
	Finally, we notice that the cost of an optimal co-connecting hypergraph on the configurations $SSS$, $SSU$ and $SUU$ is 25 while it is 26 on the configuration $UUU$. Therefore, there exists a solution of cost $25m$ if and only of there exists a way to affect all the variables to either $\True$ or $\False$ such that every clause is satisfied by at least one variable, which comes down to saying that the formula $\mathscr{F}$ is satisfiable. 
	
	This proves that \textbf{co-OCGH} and therefore \textbf{OCGH} and \textbf{MCTS} are all NP-hard.
	Moreover, Lichtenstein proved in \cite{planar} that 3-SAT remains NP-complete when restricted to formulas whose incidence graph is planar. 
	The incidence graph of a formula $\mathscr F$ is the bipartite graph representing the relation of belonging between the variables and the clauses of $\mathscr F$.
	Clearly, if the incidence graph of $\mathscr F$ is planar then $G_{\mathscr F}$ is planar too.
	We conclude that \textbf{MCTS} is NP-hard even on co-planar graphs.
	 
\section*{Conclusion}

Our work proves that finding a minimum connecting transition set is NP-hard even on co-planar graphs. 
This notably implies the NP-hardness of other problems that generalizes this one such as finding a minimum connecting transition set in a graph that already has forbidden transitions. 

A lot of our results suggest that the density of the graph has an impact on the complexity of \textbf{MCTS}.
Consequently, it would be interesting to study the complexity of this problem on sparse graphs such as planar graphs or graphs with bounded treewith.

Further works could lead us to generalize this study to directed graphs, that are more suitable for many practical applications. 
Another interesting continuation of this work would also be the study of low-stretch connecting transition sets, a problem that is already well-studied for minimal spanning trees \cite{stretch}. 
Intuitively, it consists in looking for a subset of transitions $T$ such that the shortest $T$-compatible path between two vertices is not much longer than the shortest path in the graph with no forbidden transitions, which is also an important criteria of robustness.

\section*{Acknowledgments} 

The authors would like to thank Marthe Bonamy, Mamadou M. Kanté, Arnaud Pêcher, Théo Pierron and Xuding Zhu for the interest they showed for our work and for inspiring discussions.

\bibliography{biblio}
\bibliographystyle{plain}
	 
\end{document}